\newtheorem{theorem}{Theorem}[section] 
\newtheorem{definition}[theorem]{Definition} 
\newtheorem{lemma}[theorem]{Lemma} 
\newtheorem{remark}[theorem]{Remark} 
\newtheorem{proposition}[theorem]{Proposition} 
\crefname{subsubsection}{section}{sections} 
\crefname{lemma}{lemma}{lemmas} 
\DeclareMathAlphabet\mathbfcal{OMS}{cmsy}{b}{n}
\newcommand{\dd}{\mathrm{d}} 
\newcommand{\black}{\color{black}}
\newcommand{\zb}{\bar{z}}
\def\be{\begin{equation}}
\def\ee{\end{equation}}
\def\bi{\begin{itemize}}
\def\ei{\end{itemize}}
\newcommand{\beq}{\begin{eqnarray}}
\newcommand{\eeq}{\end{eqnarray}}
\def\rf{\mathfrak{r}}
\def\zb{\bar{z}}
\newcommand{\C}{\mathbb{C}}
\newcommand{\Ib}{\mathcal{I}}
\newcommand{\tr}{\text{tr}}
\newcommand{\ad}{\text{ad}}
\newcommand{\ra}{\vartriangleleft} 
\newcommand{\la}{\vartriangleright} 
\newcommand{\blla}{\blacktriangleright} 
\title{Boundary Actions and Loop Groups: A Geometric Picture of Gauge Symmetries at Null Infinity}
\author[a,1]{Silvia Nagy,\note{Corresponding author.}}
\author[b,c]{Javier Peraza,}
\author[a]{Giorgio Pizzolo}
\affiliation[a]{Department of Mathematical Sciences, Durham University, Durham, DH1 3LE, UK}
\affiliation[b]{Perimeter Institute for Theoretical Physics, 31 Caroline St. N., Waterloo ON, Canada, N2L 2Y5}
\affiliation[c]{Centro de Matemática, Universidad de la República, Montevideo, Uruguay}
\emailAdd{silvia.nagy@durham.ac.uk}
\emailAdd{jperaza@pitp.ca}
\emailAdd{giorgio.pizzolo@durham.ac.uk}
\abstract{
In previous work \cite{Nagy:2024dme,Nagy:2024jua}, we proposed an extended phase space structure at null infinity accommodating large gauge symmetries for sub$^n$-leading soft theorems in Yang-Mills, via dressing fields arising in the Stueckelberg procedure. Here, we give an explicit boundary action controlling the dynamics of these fields. This allows for a derivation from first principles of the associated charges, together with an explicit renormalization procedure when taking the limit to null and spatial infinity, matching with charges proposed in previous work. Using the language of fibre bundles, we relate the existence of Stueckelberg fields to the notion of extension/reduction of the structure group of a principal bundle, thereby deriving their transformation rules in a natural way, thus realising them as Goldstone-like objects. Finally, this allows us to give a geometric picture of the gauge transformation structure at the boundary, via a loop group coming from  formal expansions in the coordinate transversal to the boundary.
}
\begin{document} 

\maketitle


\section{Introduction}

In the search for a flat space holographic principle, symmetries at the boundary of spacetime have played a foundational role. Their ability to supply a first-principles derivation of soft theorems in scattering amplitudes has sparked extensive subsequent work\footnote{See reviews \cite{Strominger:2017zoo,Pasterski:2021raf,Raclariu:2021zjz} and references within.}. Soft theorems are factorisations of amplitudes in the limit when the energy of one of the particles vanishes. More generally, they can be written as a series expansion in the small energy, thus capturing further information from the subleading orders\footnote{See \cite{Campiglia2015,Campiglia2020,Campiglia:2021oqz,Strominger2014a,Lysov:2014csa,Donnelly:2016auv,Speranza:2017gxd,Freidel:2020ayo,Freidel:2020svx,Freidel:2020xyx,Freidel:2021dxw,Ciambelli:2021nmv,Freidel:2021cjp,Freidel:2021dfs,Ciambelli:2021vnn,Geiller:2024bgf,Geiller:2022vto,Mao:2017tey,Bern:2014vva,Saha:2019tub,Campiglia:2024uqq,Upadhyay:2025ged,Fuentealba:2025ekj,Briceno:2025ivl} for relevant work.}. Unfortunately, it is not immediately obvious how to capture subleading terms from symmetries acting canonically on the phase space at the boundary.

In \cite{Nagy:2024dme} and \cite{Nagy:2024jua}, we developed a systematic framework for extending the phase space so as to accommodate symmetries associated with all sub$^n$-leading orders in Yang–Mills theory. This construction builds on earlier results at the first subleading order \cite{Campiglia2020,Campiglia:2021oqz} and in the simplified set-ups of the self-dual sector \cite{Nagy:2022xxs} and QED \cite{Peraza:2023ivy}. We further gave an interpretation of the new ingredients required for the extended phase space as Stueckelberg fields. These are Goldstone-like fields on which the symmetries are realised non-linearly. The corresponding charges were shown to generate algebras that admit consistent truncations at any sub$^n$-leading order. Furthermore, in the self-dual subsector, these structures reproduce the infinite-dimensional  S-algebra previously studied in \cite{Strominger:2021mtt,Freidel:2023gue,Hu:2023geb}.

In this work, we take an important next step by formulating an action principle for the Stueckelberg fields. Remarkably, their dynamics is governed entirely by an action localised on the boundary, from which we are able to recover the charges previously derived in \cite{Nagy:2024dme,Nagy:2024jua}. This construction thus brings us closer to a realisation of a holographic principle and also provides a framework that can be extended to more general settings. Additionally, we provide an explicit renormalisation procedure -- both in the radial coordinate $r$ and in the retarded time $u$ --  which was absent in our earlier analysis, building on previous work in \cite{Freidel:2019ohg,Compere:2020lrt, Ruzziconi:2020wrb, Chandrasekaran:2021vyu, Peraza:2023ivy, Campiglia:2018dyi, Romoli:2024hlc, Riello:2024uvs}.

Using the language of fibre bundles, and building on \cite{Attard_2018,Fran_ois_2021,Popov:2021gsa,Popov:2022ytx}, we present a formal account of the Stueckelberg field, deriving systematically both its existence and its transformation properties. This framework, in turn, provides a clear geometric picture of the structure of gauge transformations at the boundary (null infinity $\Ib^+$ in our case). By considering a formal expansion on the transversal parameter to the boundary, we construct a new structure group, which takes the form of a loop group. Its associated gauge group contains transformations acting on the Stueckelberg field, while the gauge vector remains invariant under these transformations. 

The article is structured as follows. In \autoref{sec:transformation_at_group_level}, we review the Stueckelberg procedure and extend our earlier results in two directions: by deriving the transformations of the fields at the group level, not just the algebra level, and by extending the procedure such that all the symmetry transformations -- including the leading order one -- are realised on the Stueckelberg fields. This extension is essential for the construction of the boundary action encoding the dynamics of the Stueckelberg fields in \autoref{sec: Lagrangian}. In \autoref{sec: renormalisation}, we implement a procedure for renormalising this action by exploiting the freedom in the pre-symplectic potential. In \autoref{sec: Fibre bundle}, we use the framework of principal fibre bundles in order to provide a geometrical setting in which we can derive extensions and reductions of gauge symmetries. We then use these results to describe gauge transformations at the boundary as elements of gauge groups corresponding to suitable loop groups. Finally, we conclude in \autoref{sec: Conclusions}.

\section{The Stueckelberg procedure: review and extension}
\label{sec:transformation_at_group_level}

We are working in YM theory, with the standard action in the bulk given by 
\be \label{YM_orig_action}
 S=-\frac{1}{2}\int_M\tr(*\mathcal{F}\wedge \mathcal{F})
\ee 
where $M$ is an asymptotically flat spacetime with local coordinates $x^\mu=(\mathfrak{r},\mathbf{y})$. Here $\mathbf{y}$ parametrise the boundary, in particular in Bondi coordinates we would have $\mathfrak{r}=r$ and $\mathbf{y}=(u,z,\bar{z})$. The gauge field admits a general expansion
\be\label{A log expansion} 
\mathcal{A}_\mu (x)=\sum_{n,k}\frac{\text{log}^k \mathfrak{r}}{\mathfrak{r}^n} A_\mu^{(-n;k)}(\mathbf{y}) \ ,
\ee 
with $n$ and $k$ such that $\lim_{\mathfrak{r}\to\infty}\frac{\text{log}^k \mathfrak{r}}{\mathfrak{r}^n}=\mathcal{O}(1)$. This is the natural physical fall-off expected for the gauge field. The gauge parameters which preserve this fall-off will be of the form\footnote{We set all ``small'' gauge transformations (i.e. those which vanish at the boundary) to 0, as they don't contribute to the soft theorems.}
\be\label{Lambda0def}
\Lambda_0=\Lambda_0(\mathbf{y}) \;.
\ee 
However, in order to access sub$^n$-leading soft effects in the scattering amplitudes via the Ward identity, we need, in general, a large gauge parameter such that
\begin{equation}
    \label{def: Lambda plus}
    \Lambda_+(x)\coloneqq\sum_{n,k}\mathfrak{r}^n\log^k\mathfrak{r}\,\Lambda_+^{(n;k)}(\mathbf{y})\,, \quad \text{with}\quad n,k\ \ \text{such that} \ \ \lim_{\mathfrak{r}\rightarrow\infty}\mathfrak{r}^n\log^k\mathfrak{r}=\infty\;.
\end{equation}
These will of course violate the fall-off of the gauge field, and the goal of the Stueckelberg procedure is to introduce fields on which these act naturally.

\subsection{Group theoretic formulation of the Stueckelberg procedure}
\label{subsec:review Stueck}

Here we review the procedure we developed in \cite{Nagy:2024dme,Nagy:2024jua}, which introduces the Stueckelberg fields capable of accommodating local transformations with overleading parameters $\Lambda_+$ given in \eqref{def: Lambda plus}. Moreover, we extend the procedure to obtain the full group transformations of these fields, from which the transformations in \cite{Nagy:2024dme,Nagy:2024jua} are recovered as the linear approximation in the parameter.

To start, let $\mathcal{H},\mathcal{G}$ be the gauge groups generated, respectively, by the parameters
\begin{equation}
    \label{def:parameter Lambda}
    \Lambda_0(\mathbf{y}) \quad \text{and} \quad \Lambda(x)\coloneqq\Lambda_0(\mathbf{y})+\Lambda_+(x)\;,
\end{equation}
with $\Lambda_0(\mathbf{y})$ and $\Lambda_+(x)$ given in \eqref{Lambda0def} and \eqref{def: Lambda plus}, respectively.

In particular, this implies $\mathcal{H}\subset\mathcal{G}$. The gauge field $\mathcal A$ on $M$ naturally transforms only under $\mathcal{H}$, that is to say
\begin{equation}
    \label{eq:initial YM field, case 1}
   \mathcal A^\eta=\eta^{-1}(\mathcal A+\dd)\eta \,, \quad \eta\coloneqq e^{-i\Lambda_0(\mathbf{y})}\in\mathcal{H} \;.
\end{equation}
Starting from $\mathcal{A}$, we want to define a new gauge field $\tilde{\mathcal{A}}$ that transforms under  $\mathcal{G}$, i.e.
\begin{equation}
    \label{eq:transformation A tilde}
    \tilde{\mathcal{A}}^\gamma=\gamma^{-1}(\tilde{ \mathcal A}+\dd)\gamma \,, \quad \gamma\coloneqq e^{-i\Lambda(x)}\in\mathcal{G} \;.
\end{equation}
To do so, we promote the parameter $\Lambda_+$ to a field $\Psi_+$\footnote{In this paper we denote by $\Psi_+$ the Stueckelberg field introduced in \cite{Nagy:2024dme,Nagy:2024jua}, where it was simply written as $\Psi$. The change in notation is motivated by the fact that here we introduce two distinct dressing fields, cf. equations \eqref{def:dressing field s plus} and \eqref{fullStueckField}, transforming under different symmetry groups. The subscript \enquote{$+$} allows us to clearly distinguish between them. In \cite{Nagy:2024dme,Nagy:2024jua}, such a distinction was unnecessary, hence the simpler notation without the subscript was adopted for brevity. Moreover, in contrast with \cite{Nagy:2024dme,Nagy:2024jua}, here we will not consider field-dependent gauge transformations.} which we use to dress the initial gauge field as follows:
\begin{equation}
    \label{def:A tilde s plus}
    \tilde {\mathcal A}\coloneqq s_+^{-1}(\mathcal A+\dd) s_+ \;,
\end{equation}
where
\begin{equation}
    \label{def:dressing field s plus}
    s_+(x)\coloneqq e^{-i\Psi_+(x)} \,, \quad \Psi_+(x)\coloneqq\sum_{n,k}\mathfrak{r}^n\log^k\mathfrak{r}\,\Psi_+^{(n;k)}(\mathbf{y}) \;,
\end{equation}
and $n,k$ are as in \eqref{def: Lambda plus}. We will refer to both $s_+$ and $\Psi_+$ as \textit{dressing} fields or, equivalently, \textit{Stueckelberg} fields\footnote{In \cref{sec: Fibre bundle}, in order to align with the community studying Dressing Field Methods \cite{Attard_2018,Fran_ois_2021,Fournel:2012cr,Francois:2015oca,Attard:2016stx,Francois:2017akk,Mathieu:2019lgi,Berghofer:2021ufy,FrancoisAndre:2023jmj,Berghofer:2024plf,Francois:2025sic}, we adopt a slightly different terminology. In the rest of the paper, however, the two terms may be used interchangeably.}. Note that although \eqref{def:A tilde s plus} has the form of a gauge transformation, it is conceptually different, as $\Psi_+(x)$ is a field, and not a parameter. This will be further clarified in \autoref{sec: Fibre bundle}.    

We now determine the correct transformation rule of $\Psi_+$ under $\mathcal{G}$ such that the dressed field $\tilde{A}$ transforms as required, i.e. as a YM field under the gauge group with parameter $\Lambda$. To begin with, observe that for all $\gamma\in \mathcal{G}$ there exists a unique pair
\be
\label{eq:gamma decomposition}
(\nu_\gamma,\eta_\gamma)\in \mathcal{N}\times\mathcal{H},\quad\text{such that}\quad\gamma=\nu_\gamma\eta_\gamma\ ,
\ee 
where $\mathcal{N}$ is the group with elements of the form $e^{-i\Lambda_+}$. Indeed,
\begin{equation} \label{eq:normality}
    \gamma=e^{-i\Lambda}=e^{-i\Lambda_0-i\Lambda_+}=\overbrace{\left(\prod_{n=\infty}^2e^{-iC_n(\Lambda_0,\Lambda_+)}\right)e^{-i\Lambda_+}}^{\in\mathcal{N}}\overbrace{e^{-i\Lambda_0}}^{\in\mathcal{H}} \;,
\end{equation}
where $C_n(\Lambda_0,\Lambda_+)$ is a homogeneous expression of $n-1$ nested commutators of $\Lambda_0$ and $\Lambda_+$, and as such it has the same expansion as $\Lambda_+$. The last equality above is known as the left-oriented Zassenhaus formula, which can be regarded as the inverse of the Baker–Campbell– Hausdorff (BCH) formula, see e.g. \cite{Casas:2012nqk} for a reference. One finds that the required transformation for the dressing field $s_+$ under $\mathcal{G}$ is
\begin{equation}
    \label{res:trasformation of s plus}
    s_+^\gamma=\eta_\gamma^{-1}s_+\gamma \;,
\end{equation}
where $\eta_\gamma$ is as in \eqref{eq:gamma decomposition}. We will give a formal derivation of \eqref{res:trasformation of s plus} in \autoref{sec: Fibre bundle}. For now we verify directly that it gives the desired transformation for $\tilde{\mathcal A}$:

\begin{equation}\label{tildeAgammaDerivation}
    \begin{aligned}
        \tilde{\mathcal A}^\gamma&=(s_+^\gamma)^{-1}\mathcal A^\gamma s_+^\gamma+(s_+^\gamma)^{-1}\dd s_+^\gamma \\
        &=\gamma^{-1}s_+^{-1}\eta_\gamma(\eta_\gamma^{-1}\mathcal A\eta_\gamma+\eta_\gamma^{-1}\dd\eta_\gamma)\eta_\gamma^{-1}s_+\gamma+\gamma^{-1}s_+^{-1}\eta_\gamma\dd(\eta_\gamma^{-1}s_+\gamma) \\
        &=\gamma^{-1}(s_+^{-1}\mathcal As_++s_+^{-1}\dd s_+)\gamma+\gamma^{-1}\dd\gamma \\
        &=\gamma^{-1}\tilde{\mathcal A}\gamma+\gamma^{-1}\dd\gamma \;,
    \end{aligned}
\end{equation}
where we used the fact that $\mathcal A$ transforms only under $\Lambda_0$, that is to say $\mathcal A^\gamma=\mathcal A^{\eta_\gamma}$. To find the variation of $\Psi_+$ corresponding to the transformation \eqref{res:trasformation of s plus} we write
\begin{equation}
    \begin{cases}
        \delta s_+=\delta e^{-i\Psi_+}=e^{-i\Psi_+}\mathcal{O}_{-i\Psi_+}(-i\delta\Psi_+) \\
        \delta s_+=\eta_\gamma^{-1}s_+\gamma-s_+=e^{i\Lambda_0}e^{-i\Psi_+}e^{-i\Lambda}-e^{-i\Psi_+} \;,
    \end{cases}
\end{equation}
where $\mathcal{O}$ is the operator
\begin{equation} \label{eq:curly_O_def}
    \mathcal{O}_X\coloneqq\frac{1-e^{-\ad_X}}{\ad_X}\; ,
\end{equation}
to be understood as a formal series expansion in $\ad_X(Y)=[X,Y]$, see \cite{Nagy:2024dme,Nagy:2024jua} for details. Thus, we find
\begin{equation}
    \delta\Psi_+=i\mathcal{O}_{-i\Psi_+}^{-1}(e^{i\Psi_+}e^{i\Lambda_0}e^{-i\Psi_+}e^{-i\Lambda}-1)\;.
\end{equation}
At linear order in $\Lambda_0$ and $\Lambda_+$:
\begin{equation}
    \begin{aligned}
        \delta\Psi_+&=i\mathcal{O}_{-i\Psi_+}^{-1}[e^{i\Psi_+}(1+i\Lambda_0)e^{-i\Psi_+}(1-i\Lambda)-1] \\
        &=\mathcal{O}_{-i\Psi_+}^{-1}(\Lambda-e^{i\Psi_+}\Lambda_0e^{-i\Psi_+}) \\
        &=\Lambda_+ - \frac{i}{2}[\Psi_+,\Lambda+\Lambda_0]+\dots
    \end{aligned}
\end{equation}
This transformation law for $\Psi_+$ coincides exactly with the one previously obtained in \cite{Nagy:2024dme,Nagy:2024jua}, with the above derivation having the advantage of being more general and formulated at the group level. In particular, as discussed in \cite{Nagy:2024dme,Nagy:2024jua}, at $0$-th order in the field  $\Psi_+$ transforms via a shift, thus it can be interpreted as the Goldstone boson associated with the spontaneous breaking of the extended gauge symmetry in the bulk. We refer the reader to \cite{Nagy:2024jua} for a more thorough discussion.

\subsection{Decoupling all the symmetries via an extended Stueckelberg procedure} \label{sec:two_step_stueck}
In the previous subsection we started from the gauge field $\mathcal A$ transforming under the group $\mathcal{H}$ generated by leading order gauge transformations $\Lambda_0(\mathbf{y})$, and defined a new gauge field $\tilde{\mathcal A}$ transforming under the group $\mathcal{G}\supset\mathcal{H}$, generated by leading and overleading transformations, see \eqref{eq:initial YM field, case 1} and \eqref{eq:transformation A tilde}. This was done by promoting the parameter $\Lambda_+$ to a field $\Psi_+$ and introducing the dressing field $s_+= e^{-i\Psi_+}$. As we will see in \autoref{sec: Lagrangian}, for the purpose of constructing the boundary Lagrangian it is actually desirable to have an invariant gauge field, such that its transformation has been \enquote{transferred} onto a new field, living on the boundary. We will achieve this here by another application of the Stueckelberg procedure, and identify the required boundary field with a new Stueckelberg field.

Given that $\mathcal A$ transforms only under $\Lambda_0(\mathbf{y})$, we can obtain a fully invariant gauge field $\mathbfcal A$ via the dressing 
\be \label{A0dressing}
 \mathbfcal A\coloneqq s_0^{-1}(\mathcal A+\dd) s_0 
 \ ,
\ee 
where
\be
s_0=s_0(\mathbf{y})\equiv e^{-i\Psi_0(\mathbf{y})}\ .
\ee 
Then, if $s_0$ transforms as\footnote{See \autoref{sec: Fibre bundle} for a formal derivation via fibre bundles.}
\be \label{s_0Transform}
s_0^\eta=\eta^{-1} s_0\; ,
\ee 
with $\eta$ as in \eqref{eq:initial YM field, case 1}, it is easy to show, via a calculation similar to that in \eqref{tildeAgammaDerivation}, that $\mathbfcal A$ is invariant under the full group $\mathcal G$, i.e. 
\be \label{A0invariance}
\mathbfcal A^\gamma=\mathbfcal A^{\eta_\gamma}=\mathbfcal A\;.
\ee 
It is now straightforward to write $\tilde{\mathcal{A}}$ in terms of $\mathbfcal A$ by solving for $\mathcal{A}$ in \eqref{A0dressing} and plugging this into \eqref{def:A tilde s plus} to get
\be 
\label{tildeAinTermsOfA0}
\tilde{\mathcal A}=s_+^{-1}s_0 \mathbfcal A s_0^{-1} s_+
+s_+^{-1}s_0\dd\left( s_0^{-1}s_+\right) \; .
\ee
At this stage, recalling that $s_+=e^{-i\Psi_+(x)}$, we can define
\be 
\label{fullStueckField}
s=s_0^{-1}s_+=e^{i\Psi_0(\mathbf{y})}e^{-i\Psi_+(x)}=e^{i\Psi_0(\mathbf{y})-i\tilde\Psi_+(x)} 
\equiv e^{-i\Psi(x)} \; ,
\ee 
where $\tilde\Psi_+(x)$ have the same fall-off as $\Psi_+(x)$ in \eqref{def:dressing field s plus}, as they are obtained as nested commutators of the latter with $\Psi_0(\mathbf{y})$, via the BCH formula.

We remark that the 2-stage Stueckelberg procedure consisting of 
\begin{itemize}
\item dressing $\mathcal A$ to achieve the extension of the symmetry grup from  $\mathcal{H}$ to  $\mathcal{G}$, as described in \autoref{subsec:review Stueck}.
\item introducing the invariant $\mathbfcal A$ by transferring its transformation to the new Stueckelberg field $s_0$, as described above in \eqref{A0dressing}-\eqref{A0invariance}
\end{itemize} 
can be reinterpreted, via \eqref{tildeAinTermsOfA0} and \eqref{fullStueckField}, as a single step Stueckelberg procedure, 
\begin{equation}\label{single step Stueck}
    \tilde {\mathcal A}\coloneqq s^{-1}(\mathbfcal A +\dd) s \;,
\end{equation}
with $s$ as in \eqref{fullStueckField}, which extends an invariant field $\mathbfcal A$ to a field $\tilde {\mathcal A}$ transforming under $\mathcal G$. We can derive the transformation of $s$ from \eqref{s_0Transform} and \eqref{res:trasformation of s plus}
\be\label{full s transform} 
s^\gamma =\left(s_0^{-1}\right)^\gamma\left(s_+\right)^\gamma=s_0^{-1}\eta_\gamma \eta_\gamma^{-1}s_+\gamma=s\gamma 
\ee 
There is an alternative way to derive the transformation rule for $s$ in \eqref{full s transform}. We observe that the single step Stueckelberg procedure in \eqref{single step Stueck} can actually be thought of as a special case of the procedure described in \autoref{subsec:review Stueck}, with the group $\mathcal{H}$ reduced to the identity element. Indeed, saying that the initial $\mathbfcal A$ does not transform is equivalent to saying that it transforms only under the identity, that is
\begin{equation}
    \mathbfcal A^\epsilon=\epsilon^{-1}(\mathbfcal A+\dd)\epsilon=\mathbfcal A \,, \quad \epsilon\coloneqq \mathbf{1}_{\mathcal{G}} \;.
\end{equation}
This automatically extends $s_+$ to $s$ when performing the Stueckelberg procedure, and the transformation rule for $s$ is then just a special case of \eqref{res:trasformation of s plus} with $\eta_\gamma=\mathbf{1}_{\mathcal{G}}$, so we have 
\be 
s^\gamma=\mathbf{1}_{\mathcal{G}} s\gamma = s\gamma  \; .
\ee 
in agreement with \eqref{full s transform}. Finally, the corresponding transformation rule for the Stueckelberg field $\Psi$ introduced in \eqref{fullStueckField} is given by
\begin{equation}
    \delta\Psi=i\mathcal{O}_{-i\Psi}^{-1}(e^{-i\Lambda}-1)\;,
\end{equation}
where $\mathcal{O}$ is defined in \eqref{eq:curly_O_def}. At linear order in $\Lambda$ the above equation reads
\begin{equation}
    \begin{aligned}
        \delta\Psi&=\mathcal{O}_{-i\Psi}^{-1}(\Lambda) \;.
    \end{aligned}
\end{equation}

\section{Edge mode Lagrangian for YM}
\label{sec: Lagrangian}

We have seen that the dressing procedure in the previous section allows for the introduction of Stueckelberg fields on which overleading symmetries, related to sub$^n$-leading soft theorems, can be realised naturally. The next logical step is to construct a Lagrangian governing the dynamics of these fields. This would allow for the construction of a symplectic form from first principles, starting from a Lagrangian formulation that contains the Stueckelberg field from within, and not as an \emph{ad hoc} input of the theory.

Our approach for constructing the action here takes inspiration from \cite{Popov:2021gsa,Geiller:2019bti,Blommaert:2018oue}, and earlier works in \cite{GERVAIS197655,Donnelly:2016auv, Harlow:2019yfa} \footnote{A recent work, \cite{Campoleoni:2025bhn}, constructs finite actions for Maxwell theory and linearized gravity, by introducing boundary counter-terms in the action. It would be interesting to match their additional terms to the ones we use in this work.}. Indeed, we will see that in the leading order limit, our action will reduce to that in \cite{Geiller:2019bti,Blommaert:2018oue}. We propose the following action: 
\be  
 S=S_M+S_{\partial M}
 =-\frac{1}{2}\int_M\tr(*\tilde{\mathcal F}\wedge \tilde{\mathcal F})+\int_{\partial M}\tr\left[\tilde j\wedge\tilde{\mathcal A}\right]^{\text{ren}}
\ee
where the boundary action is appropriately renormalised to avoid possible divergences. One could take the more standard approach of renormalising the pre-symplectic potential $\theta$, when computing the charge via the phase space formalism. Indeed, we will demonstrate how to rigorously renormalise $\theta$ in \autoref{sec: renormalisation}, and verify that it gives the same charge as the one we will compute directly from the action in this section. $\tilde{\mathcal A}$ is as given in \eqref{single step Stueck}, copied below for convenience:
\be 
\tilde {\mathcal A}\coloneqq s^{-1}(\mathbfcal A +\dd) s \;.
\ee 
Consequently, 
\be\label{tildeFboldF} 
\tilde{\mathcal F}\coloneqq s^{-1}\mathbfcal Fs \, , 
\ee
with $\mathbfcal F$ the field strength of $\mathbfcal A$. We postulate that
\be 
\tilde{j}=s^{-1}\mathbf js\;.
\ee 
This will be justified a posteriori, as we will see shortly that the source two-form $\mathbf j$ will become identified with the dual of the field strength. The Lagrangian then becomes:
\be \label{bound Lag}
S=S_M+S_{\partial M}=-\frac{1}{2}\int_M\tr(* \mathbfcal F\wedge \mathbfcal F)+\int_{\partial M} \tr[\mathbf j\wedge(\mathbfcal A+\dd ss^{-1}) ]^\mathrm{ren}\;.
\ee
Note that the bulk action is identical to \eqref{YM_orig_action}, by virtue of the definition of $\mathbfcal A$, see \eqref{A0dressing}. The action is invariant under the gauge group $\mathcal{G}$, under which the constituent fields transform as\footnote{See \eqref{A0invariance} and \eqref{full s transform}. The transformation law for $\mathbf{j}$ is an ansatz at this stage, which will be verified later, see \eqref{jF identification}.} 
\be \label{bolded_transforms}
\mathbfcal A^\gamma=\mathbfcal{A}\,,\quad\mathbfcal F^\gamma=\mathbfcal F\,,\quad s^\gamma=s\gamma\,,\quad
\mathbf j^\gamma=\mathbf j \; ,
\ee
when $\mathbf{j}$ obeys the conservation law 
\be \label{cons j}
  \dd \mathbf j-[\dd ss^{-1},\mathbf{j}]= 0\qquad \Leftrightarrow\qquad
  s\dd(s^{-1} \mathbf j s)s^{-1}=0
\ee 
following from the e.o.m.\ from $s$, as we will see below. We remark that \eqref{cons j} is just the gauge covariant derivative of the source $\mathbf{j}$ with respect to the pure gauge field $\dd ss^{-1}$. 

In order to simplify the renormalisation procedure, we will assume a non-logarithmic fall-off for the gauge field, source and Stueckelberg fields: 
\begin{equation}\label{rexpforL}
    \mathbfcal A(x)=\sum_{n\leq0}\mathfrak{r}^n \mathbf A^{(n)}(\textbf{y})\; , \quad
    \mathbf j(x)=\sum_{n\leq0}\mathfrak{r}^n \mathbf j^{(n)}(\textbf{y})
    \; , \quad
    s(x)=\sum_{n\geq0}\mathfrak{r}^n s^{(n)}(\textbf{y})\;.
\end{equation}
Then we simply write the renormalised boundary action as 
\be 
\begin{aligned}
S_{\partial M}=&\int_{\partial M}\tr\left[\mathbf j\wedge(\mathbfcal A+\dd ss^{-1})\right]^{\text{ren}}\\
&=\int_{\partial M}\tr\left[\mathbf j\wedge(\mathbfcal A+\dd ss^{-1})\right]^{(0)}\\
&=\int_{\partial M}\tr\sum_{n=0}^{\infty}\mathbf j^{(-n)}\wedge(\mathbfcal A+\dd ss^{-1})^{(n)} \;,
\end{aligned}
\ee
meaning that the renormalisation in this case consists of just taking the 0th order in the $\mathfrak{r}$ expansion. Note that we can think of the objects entering $S_{\partial M}$ above as a collection of purely boundary fields, in particular for the source and the Stueckelberg field we will have two infinite sets of boundary fields, labelled by $n$:
\be 
\mathbf j^{(-n)}(\textbf{y}), \quad
s^{(n)}(\textbf{y}),\quad 
n\geq 0 \;,
\ee 
and the $\mathfrak{r}$-expansion in \eqref{rexpforL} can be though of as just a useful book-keeping device. 
We can then make use of the results in \autoref{App:some derivations} to write the variation of the action
\begin{align}
    \delta S_M&=-\int_M\tr(\delta \mathbfcal A\wedge \dd_{ \mathbfcal A}*\mathbfcal F)-\int_{\partial M}\tr(\delta \mathbfcal A\wedge *\mathbfcal F)\label{var1} \\
    \delta S_{\partial M}
    &=\int_{\partial M}\tr[\mathbf j\wedge\delta \mathbfcal A -\left(s\dd\left(s^{-1} \mathbf j s\right)s^{-1}\right)\delta ss^{-1} +\dd(\mathbf j \delta ss^{-1})]^{(0)} \; .\label{var2}
\end{align}

The bulk equation of motion, coming from \emph{variations of the gauge field which vanish at the boundary}, is of course just the standard YM equation
\be
    \dd_{\mathbfcal A} * \mathbfcal F= 0 \; , 
\ee
In order to obtain the boundary equations, \emph{we formally allow the variation of $\mathbfcal A$ to have an expansion in both positive and 0 powers of $\mathfrak{r}$} so that the relevant boundary term is
\be 
\int_{\partial M}\tr\left[\delta \mathbfcal A\wedge \left(\mathbf{j}-*\mathbfcal F\right)\right]^{(0)}
=\int_{\partial M}\tr\sum_{n=0}^{\infty}\delta \mathbf A^{(n)}\wedge \left(\mathbf j-* \mathbf F\right)^{(-n)}\;,
\ee 
from which we read off:
\be \label{jF identification}       
\mathbf j^{(-n)}=* \mathbf F^{(-n)},\ n\geq 0\quad \Rightarrow\quad \mathbf{j}=*\mathbfcal F\;.
\ee
Finally, as previously advertised, the e.o.m. for $s$ gives the current conservation in \eqref{cons j}.

\subsection{Charges}
The symplectic potential can be read off from \eqref{var1} and \eqref{var2}\footnote{\label{footnote full theta} We remark that a slightly different form, namely
\be\label{full theta}
    \theta=-\tr\left(\delta \mathbfcal A\wedge *\mathbfcal F-\dd(\mathbf j\delta ss^{-1})\right)     
\ee
is given in \cite{Geiller:2019bti,Blommaert:2018oue,Donnelly:2016auv, Harlow:2019yfa}. This accommodates an additional symmetry with a left action on the fields
\be 
\delta^l_\alpha s = -\alpha s , \quad \delta^l_\alpha \mathbfcal A = \dd_{\mathbfcal A} \alpha, \quad \delta^l_\alpha \mathbf j = [\mathbf j,\alpha] \; ,
\ee
which is seen as a true gauge symmetry, as the charge computed from \eqref{full theta} vanishes.} 
\be
\begin{aligned} \label{eq:symplectic_potential}
    \theta&=- \tr\ \dd(\mathbf j\delta ss^{-1})^{(0)} \\
    &=-\tr\ \dd(*\mathbfcal F\delta ss^{-1})^{(0)}\;,
\end{aligned}    
\ee
where we made use of \eqref{jF identification} to get to the second line. Next we compute the symplectic form:
\begin{equation}
    \begin{aligned}
        \omega[\delta,\bar\delta]&\coloneqq\tr\left[\bar\delta\theta(\delta)-\delta\theta(\bar\delta)\right]^{(0)} \\
        &=\tr\left[-\bar\delta\dd(*\mathbfcal F\delta ss^{-1})-(\delta\leftrightarrow\bar\delta)\right]^{(0)} \\
        &=-\tr\left[\dd(\bar\delta(*\mathbfcal F)\delta ss^{-1}-*\mathbfcal F\delta s s^{-1}\bar\delta ss^{-1})-(\delta\leftrightarrow\bar\delta)\right]^{(0)} \;.
    \end{aligned}
\end{equation}
Choose $\delta$ to be the infinitesimal variation for the gauge transformation \eqref{bolded_transforms}:
\begin{equation}
    \begin{aligned}
        \delta_\Lambda s=-i s\Lambda\,,\quad \delta_\Lambda \mathbfcal A=0\,,\quad \delta_\Lambda \mathbfcal F=0 \; ,
    \end{aligned}
\end{equation}
then the symplectic form is
\be
    \begin{aligned}
        \omega&=-\tr\ \dd(\bar\delta(*\mathbfcal F)\delta_\Lambda ss^{-1}-*\mathbfcal F\delta_\Lambda s s^{-1}\bar\delta ss^{-1}+*\mathbfcal F\bar\delta s s^{-1}\delta_\Lambda ss^{-1})^{(0)} \\
        &=i\tr\ \dd(\bar\delta(*\mathbfcal F)s\Lambda s^{-1}-*\mathbfcal Fs\Lambda s^{-1}\bar\delta ss^{-1}+*\mathbfcal F\bar\delta s \Lambda s^{-1})^{(0)} \\
        &=i\tr\ \bar\delta\dd(*\mathbfcal Fs\Lambda s^{-1})^{(0)}\; .
    \end{aligned}
\ee
Thus finally the charge density is
\begin{equation} \label{charge_density}
    \tilde{q}_\Lambda=i\mathrm{tr}(\Lambda*s^{-1}\mathbfcal Fs)^{(0)} =i\mathrm{tr}(\Lambda*\tilde{\mathcal F})^{(0)} \; ,
\end{equation}
using \eqref{tildeFboldF}. This is exactly as computed in our previous work \cite{Nagy:2024dme,Nagy:2024jua}. Here the Lagrangian formulation allows for a first principles derivation of the charge, as well as a well-defined renormalisation procedure, which will be further detailed in \cref{sec: renormalisation}. 

We recover the usual charge algebra corresponding to Yang-Mills,

\be 
\{\tilde  q_{{\Lambda}_1},\tilde q_{{\Lambda}_2}\}=\tilde q_{-i[{\Lambda}_1,{\Lambda}_2]}\; .
\ee
The corresponding perturbative expression (both in $\Psi$ and $r$) is 
\be
\label{charge_algebra_up_to_n}
    \{\overset{n-k}{q}_{\Lambda_1^{(k)}} , \overset{n-j}{q}_{\Lambda_2^{(j)}} \} =\begin{cases}
        \overset{n-k-j}{q}_{- i[\Lambda_1^{(k)}, \Lambda_2^{(j)}]} \quad \text{if } j+k \leq n \\
        \quad 0 \qquad\qquad\qquad\;\; \text{otherwise}
    \end{cases}\; ,
\ee
where $\overset{j}{q}_{\Lambda^{(l)}}$ is the approximation up to $j$-th order in $\Psi$ of $\tilde{q}_{\Lambda^{(l)}}$. See \cite{Nagy:2024jua} for more details.



\black

\section{Renormalization}
\label{sec: renormalisation}

Having an explicit Lagrangian, in which both the bulk field $\mathbfcal{A}$ and the bundary field $s$ are present, allows us to renormalize the symplectic potential using a standard procedure developed in \cite{Freidel:2019ohg} for divergences near null infinity. Once we have a Lagrangian for the theory, which includes the degrees of freedom at the boundary, we can argue along the lines of \cite{Freidel:2019ohg} and use the equations of motion to derive a well-defined renormalized symplectic potential. Both sources of divergences, $r$- and $u$- expansions, can be treated in a ladder-like fashion. Applications of this procedure have been done in the abelian case (Maxwell) in \cite{Peraza:2023ivy} in terms of the limit $t \rightarrow +\infty$ and $u\rightarrow -\infty$, to recover the sub$^n$-soft theorem charges \cite{Campiglia:2018dyi}. It has also been done in \cite{Romoli:2024hlc} for the renormalization of asymptotic charges in the two-form formalism. Applications to gravity can be found in \cite{Campiglia:2016efb,  Ruzziconi:2020wrb, Chandrasekaran:2021vyu, Compere:2020lrt}, and in \cite{Riello:2024uvs} an exhaustive study was given in the context of renormalization of conformal infinity in gravity.

In this section, we consider the action in \eqref{bound Lag} without any renormalization,
\be  
S= -\frac{1}{2}\int_M\tr(* \mathbfcal F\wedge \mathbfcal F)+\int_{\partial M} \tr[\mathbf j\wedge(\mathbfcal A+\dd ss^{-1}) ]  \; ,
\ee
and show that the divergences in the pre-symplectic potential derived from it can be appropriately renormalized. Once we have a well-defined, non-divergent pre-symplectic potential, its boundary component can be shown to coincide with the one in \eqref{eq:symplectic_potential}, and thus the charges can be computed, giving the expression in \eqref{charge_density}.

For definiteness, we fix the space of functions and expansions that we are working on as explained in the previous section. We also take Bondi coordinates to simplify the treatment. In this case, $\rf$, the coordinate transversal to the boundary $\partial M$, is the usual radial coordinate $r$.



We can perform a limiting procedure to find the action at null infinity, by taking a family of hypersurfaces $\Sigma_R$, indexed by a parameter $R$, such that $\Sigma_R \rightarrow \Ib^+$ as $R\rightarrow +\infty$. As an illustration, we describe two different families of hypersurfaces that can be taken: one time-like and the other space-like. For the first family, take $\Sigma_R := \{r= R ; u \geq -R \}$, where $u$ is the parameter of the null geodesics generating $\Ib^+$ and taken to be defined globally in the bulk of spacetime. This family of slices are time-like, and converge point-wise to $\Ib^+$. The other useful family of hypersurfaces, used for example in \cite{Campiglia:2016hvg, Campiglia:2016efb, Peraza:2023ivy}, is given by space-like slices $\Xi_{R} := \{t = R \}$, where $t=r+u$; these hypersurfaces correspond to the slices at constant time. In the limits $R \rightarrow +\infty$, $\Xi_{R} \rightarrow \Ib^+$. See \autoref{Slices_draw}. One can take also null hypersurfaces that approach null infinity. Recent results concerning this limit can be found in \cite{Ciambelli:2025mex}.

\begin{figure}
    \centering
    \includegraphics[width=0.4\linewidth]{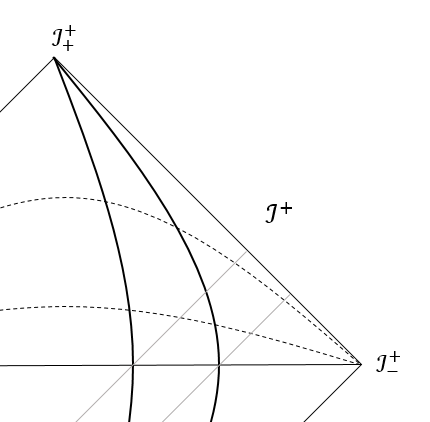}
    \caption{Representation of the two families of slices commonly used to define the symplectic structure at null infinity. Dashed line: $t=cnt.$ hypersurfaces. Solid line: $r=cnt.$ hypersurfaces. Light lines: light rays.}
    \label{Slices_draw}
\end{figure}

In what follows, we take the first family of hypersurfaces, i.e. the time-like hypersurfaces $\Sigma_R$. We have a well-defined expression for the symplectic potential at any given $R > 0$. Let $S_R = \partial \Sigma_R$. Given a pre-symplectic potential, $\theta_R$, we have that it has the following ambiguities, 

\be 
\theta_R  \mapsto \theta_R + d \pi_R + \delta \beta_R,
\ee
where the two-form $\pi_R$ modifies the canonical expression of the corner charges, i.e. it represents the addition of a corner term, and the three-form $\beta_R$ corresponds to a choice of boundary action, coming from an addition of a corner term to the Lagrangian \cite{Freidel:2019ohg}.

From the variation of the bulk and boundary Lagrangians, we can extract the associated symplectic potentials\footnote{Note that \eqref{variation_bdy} is the same as \eqref{var2} without the renormalisation.},
\beq 
\delta \mathcal{L}_{bulk} &=& - \tr \left( \delta \mathbfcal A \wedge \dd_{\mathbfcal A } * \mathbfcal F \right) + \dd \underbrace{ \tr \left(  - \delta \mathbfcal A  \wedge * \mathbfcal F \right)}_{\theta_R^{\Sigma_R}} \label{variation_bulk} \\
\delta \mathcal{L}_{bdy} &=& \tr \left( - \delta \mathbfcal A  \wedge \mathbf j -  \delta s s^{-1} ( \dd \mathbf j - [\dd ss^{-1} ,\mathbf j] ) \right) + \dd \underbrace{ \tr \left( \mathbf j \delta s s^{-1}\right)}_{\theta_R^{S_R}} \label{variation_bdy}.
\eeq

In \autoref{sec: Lagrangian}, the symplectic potential in \eqref{eq:symplectic_potential} is already a corner term, given by the renormalized version of 
\be 
\theta_R^{S_R} := \tr (\mathbf j \delta s s^{-1}),
\ee
in the limit $R \rightarrow +\infty$, and this gives the correct non-trivial charges corresponding to the sub$^n$-leading soft theorems \cite{Nagy:2024jua}. In more generality, we can extend this phase space in order to also accommodate for the left acting gauge symmetry introduced in \autoref{footnote full theta}, which has a vanishing charge. We will show the renormalization procedure for this extended \emph{pre}-symplectic potential,
\be 
\theta_R = \theta_R^{\Sigma_R} + \dd \theta_R^{S_R},
\ee
which contains both the bulk and the boundary contributions to the symplectic structure, see equations \eqref{variation_bulk} and \eqref{variation_bdy}.

By pulling back into the coordinate chart $(r,\textbf{y})$, we can write both symplectic potentials as densities. 
\beq 
\delta \mathcal{L}_{bulk} &=& - \tr \left( \delta \mathbfcal A \wedge \dd_{\mathbfcal A} * \mathbfcal F \right) + \partial_r \theta_R^{\Sigma_R \; r} + \partial_{\textbf{y}} \theta_R^{\Sigma_R \; \textbf{y}} \label{variation_bulk_1} \\
\delta \mathcal{L}_{bdy} &=& \tr \left( - \delta \mathbfcal A \wedge \mathbf j -  \delta s s^{-1} ( \dd \mathbf j - [\dd ss^{-1} , \mathbf j] ) \right) + \partial_\textbf{y} \theta_R^{S_R \; \textbf{y}} \label{variation_bdy_1}
\eeq

Observe that $\theta_R^{\Sigma_R \; r}$ is the \emph{normal} component of the symplectic potential relative to the hypersurfaces $\Sigma_R$. Then, in order to renormalize the complete symplectic potential, we proceed in two steps, first we renormalize $\theta_R^{\Sigma_R \; r}$ from \eqref{variation_bulk_1}, and then we renormalize $\theta_R^{S_R \; \textbf{y}}$ from \eqref{variation_bdy_1}.

The variation of the bulk Lagrangian can be written on-shell as follows
\be \label{eq:normal_eq}
\partial_r \theta_R^{\Sigma_R \; r} = \delta \mathcal{L}_{bulk} - \partial_{\textbf{y}} \theta_R^{\Sigma_R \; \textbf{y}}.
\ee
Then, in an expansion order by order in $r$, we can write any divergent term in $\theta_R^{\Sigma_R \; r}$ as coming from a total variation or a total divergence \cite{Freidel:2019ohg}. Indeed, if we write 
\be 
\theta_R^{\Sigma_R \; r} = \theta^{\Sigma_R \; r}_{R,(0)} (\textbf{y}) + \sum_{i=1}^{\infty} r^i Y_i (\textbf{y}) + o_R,
\ee
where $o_R$ vanishes in the limit $R \rightarrow +\infty$. Then we can take the renormalized normal component of the symplectic potential as
\be \label{eq:theta_ren}
\theta^{\Sigma_R \; r}_{ren} := \theta_R^{\Sigma_R \; r} - \sum_{i=1}^{\infty} r^i Y_i (\textbf{y}) - X(\textbf{y}),
\ee
where we can use \eqref{eq:normal_eq} to express $Y_i$ as a combination of total variations and total derivatives,
\be 
Y_i (\textbf{y}) = \text{Finite part} \left( \lim_{r \rightarrow +\infty} \frac{1}{r^{i-1}} \left( \delta \mathcal{L}_{bulk} - \partial_{\textbf{y}} \theta_R^{\Sigma_R \; \textbf{y}} \right) \right),
\ee
and $X(\textbf{y})$ is a total derivative on $\Sigma_R$, to be determined.

Next, consider equation \eqref{variation_bdy_1}. Although this equation is written in coordinates on $\Sigma_R$, it does implicitly depend on $r$ due to the location of the boundary $\partial \Sigma_R$. Then, using \eqref{jF identification} and \eqref{variation_bulk} and adding on both sides $\theta^{\Sigma_R \; r}_{R,ren} $, we obtain

\be \label{theta_bdy_div}
\theta^{\Sigma_R \; r}_{R,ren} + \partial_{\textbf{y}} \theta_R^{S_R \; \textbf{y}} = \delta \mathcal{L}_{bdy} + (\theta^{\Sigma_R \; r}_{R,ren} - \theta^{\Sigma_R \; r}) .
\ee
Observe that the first term on the left hand side is already renormalized on $r$. Inside the term $(\theta^{\Sigma_R \; r}_{R,ren} - \theta_R^{\Sigma_R \; r})$, on the right hand side, there are only higher powers in $r$, and we have that each coefficient in $r$ is either a total derivative or a total variation, cf. \eqref{eq:theta_ren}. Then we can simply subtract any higher powers of $r$ in order to renormalize $\partial_{\textbf{y}} \theta_R^{S_R \; \textbf{y}}$, 

\be 
\partial_{\textbf{y}} \theta^{S_R \; \textbf{y}}_{R,ren} := \partial_{\textbf{y}} \theta_R^{S_R \; \textbf{y}} - [\delta \mathcal{L}_{bdy} - (\theta^{\Sigma_R \; r}_{R,ren} - \theta_R^{\Sigma_R \; r})]^{O(r) \text{ and higher powers in $r$}},
\ee
as they cancel any divergent term in $r$. Finally, we arrive at 
\be 
\theta_{R,ren} := \theta^{\Sigma_R \; r}_{R,ren} + \partial_{\textbf{y}} \theta^{S_R \; \textbf{y}}_{R,ren},
\ee
which is a completely finite expression on the limit $r \rightarrow +\infty$. 

There is still a possible source of divergences: when the chart $\textbf{y}$ gets closer to spatial infinity, and therefore the coordinate generating the null geodesics goes to $\pm \infty$. For this, we denote the null coordinate $u$, and assume it is well defined in an open neighbourhood of spatial infinity (as it is the case with Bondi coordinates). The symplectic potential on the corner will be $\theta_R^{S_R \; u}$. The celestial sphere coordinates are denoted by $z,\zb$. Then, we can use the function $X(\textbf{y})$ from above to give a prescription on how to renormalize in the limit $u \rightarrow\pm \infty$. Indeed,
\be 
\theta_{R,ren} = \theta^{\Sigma_R \; r}_{R,(0)} (\textbf{y}) - X(\textbf{y}) + \partial_{\textbf{y}} \theta^{S_R \; \textbf{y}}_{R,ren},
\ee
and we can decompose
\beq 
\partial_{\textbf{y}} \theta^{S_R \; \textbf{y}}_{R,ren} &=&  \partial_{u} \theta^{S_R \; u}_{R,ren}+  \partial_{z} \theta^{S_R \; z}_{ren} +  \partial_{\zb} \theta^{S_R \; \zb}_{R,ren} \\
X(\textbf{y}) &=& \partial_u \xi^u + \partial_z \xi^z +\partial_{\zb} \xi^{\zb},
\eeq
for some vector $\xi^\textbf{y}$. Following an analogous procedure as we did for the coordinate $r$, we can construct counter-terms for $\theta_R^{S_R \; u}$ that are either total variations or total derivatives, and such that the final result is finite in the limit $u \rightarrow \pm \infty$.

\section{Fibre bundle derivation}
\label{sec: Fibre bundle}
In this section we provide a more formal derivation of the existence and transformation rules of the dressing fields introduced in \cref{sec:transformation_at_group_level}. Our approach employs the language of fibre bundles, relating the existence of the Stueckelberg field to the notion of extension/reduction of the structure group of a principal fibre bundle. To this end, we begin by briefly reviewing the basic concepts about fibre bundles, and we fix the notation that will be used throughout this section. We refer the reader to \cite{Fran_ois_2021,Nakahara:2003nw,Kobayashi:1963,Husemoller:1993,Sontz:2015,Hamilton:2017gbn,Schuller:LectureNotes2013,Attard_2018,Popov:2021gsa,Popov:2022ytx} for some references, and to \cref{App:{Lie group actions on manifolds}} for essential notions and notation about Lie group actions on manifolds.

\subsection{Review of fibre bundle formulation of gauge theories}
\subsubsection{Principal and associated bundles}\label{subsec:principal and associated bundles}
Fibre bundles provide the most comprehensive mathematical framework for describing gauge theories in physics. After recalling the definition of bundles and fixing the notation for their sections and morphisms, we will focus on two classes of bundles that play a central role in gauge theory, namely principal bundles and the associated bundles constructed from them. Precise definitions will be given below.

A \textit{bundle} is a triple $(E,\pi,M)$ where $E,M$ are topological manifolds (called the \textit{total} space and the \textit{base} space, respectively), which here are assumed to be differentiable, and $\pi\colon E\rightarrow M$ is a surjection, called the \emph{projection}. The preimage of $\pi$ at a point $m\in M$ is the \textit{fibre} of $\pi$ over $m$, and a differentiable map $\sigma\colon M\rightarrow E$ satisfying $\sigma^*\pi=\mathrm{id}_M$ is a \textit{(global)} \textit{section}\footnote{A \textit{local section} is a differentiable map $\sigma\colon U\rightarrow E$ such that $\sigma^*\pi=\mathrm{id}_U$, where $U\subset M$ is an open subset.}, where $*$ denotes the pullback. The space of all smooth sections of the bundle $(E,\pi,M)$ is denoted by $\Gamma(E)$. A \textit{fibre bundle} with (typical) fibre the manifold $F$ is a bundle in which every fibre is diffeomorphic to $F$.

Given two bundles $(E,\pi,M)$ and $(E',\pi',M')$, a \textit{bundle morphism} is a pair $(u,f)$ of maps $u\colon E\rightarrow E'$ and $f\colon M\rightarrow M'$ such that $\pi^*f=u^*\pi'$. If both $u$ and $f$ are diffeomorphisms, then $(u,f)$ is a \textit{bundle isomorphism} and we write $(E,\pi,M)\cong_\mathrm{bdl}(E',\pi',M')$. Now, consider two bundles $(E,\pi,M)$ and $(E',\pi',M)$ with the same base manifold. Then a function $u\colon E\rightarrow E'$ is a \textit{vertical bundle (iso)morphism} if $(u,\mathrm{id}_M)$ is a bundle (iso)morphism\footnote{The map $u$ is equivalently referred to as \textit{base preserving bundle (iso)morphism}, or as a \textit{bundle $M$-(iso)morphism}.}. Note that, given a bundle $\xi=(E,\pi,M)$, the set of \textit{vertical bundle automorphisms}
\begin{equation}\label{def:bdlAutvxi}
    \mathrm{Aut}^v_\text{bdl}(\xi)\coloneqq\{\Phi\colon E\rightarrow E\mid \Phi^*\pi=\pi\}
\end{equation}
forms a group under function composition.

Among fibre bundles, a particularly important class in gauge theory is that of principal bundles. These serve as the geometric framework for connections and gauge transformations, and provide the foundation on which associated bundles are constructed.
\begin{definition}[principal bundle]
    Let $G$ be a Lie group. A principal $G$-bundle is a bundle $(P,\pi,M)$ such that $P$ is a right $G$-manifold, the right action of $G$ on $P$ is free and $(P,\pi,M)\cong_{\mathrm{bdl}}(P,\rho,P/G)$, where $\rho\colon P\rightarrow P/G$ is the quotient map. The group $G$ is called the structure group of $(P,\pi,M)$.
\end{definition}
From this definition it follows that a principal $G$-bundle is a fibre bundle with typical fibre (the orbit under the right $G$-action) diffeomorphic to the structure group $G$. In order to compare different principal bundles, or to describe maps between them that respect their geometric structure, one needs the notion of a principal bundle morphism. This concept generalizes the definition of bundle morphisms to the principal setting.
\begin{definition}[principal bundle (iso)morphism]
    Let $(P,\pi,M)$ and $(P',\pi',M')$ be principal $G$- and $H$-bundles, respectively. Let $\rho\colon G\rightarrow H$ be a Lie group homomorphism and $u\colon P\rightarrow P'$, $f\colon M\rightarrow M'$ be smooth maps. The pair $(u,f)$ is called a principal bundle morphism if it is a bundle morphism and $u$ is $\rho$-equivariant. $(u,f)$ is called a principal bundle isomorphism if it is a principal bundle morphism, a bundle isomorphism and $\rho$ is a Lie group isomorphism. In this case we write $(P,\pi,M)\cong_\mathrm{p.bdl}(P',\pi',M')$.
\end{definition}
As in the case of vertical bundle morphisms, if two principal bundles have the same base manifold, one can consider \textit{vertical principal bundle (iso)morphisms}, i.e.\ principal bundle (iso)morphisms of the form $(u,\mathrm{id}_M)$. In particular, a central role in physics is played by the \textit{vertical principal bundle automorphisms} of a principal bundle $\xi=(P,\pi,M)$, that is the set
\begin{equation}\label{def:Autvxi}
    \mathrm{Aut}^v_\text{p.bdl}(\xi)\coloneqq\{\Phi\in\mathrm{Aut}^v_\mathrm{bdl}(\xi)\mid R^*_g\Phi=R_g\Phi \quad\forall g\in G\} \;,
\end{equation}
which is a group with respect to composition of functions. In the above definition, $R_g$ is the right $G$-action map associated to $g$, see \eqref{def:action maps}\footnote{Using a more explicit notation, \eqref{def:Autvxi} can be written as $\mathrm{Aut}^v_\mathrm{p.bdl}(\xi)\coloneqq\{\Phi\colon P\rightarrow P\mid \pi\circ\Phi=\pi\,,\ \ \Phi(p\ra g)=\Phi(p)\ra g \ \ \forall p\in P,g\in G\}$, where $\ra$ is the right $G$-action on $P$. See \cref{App:{Lie group actions on manifolds}} for more details.}. We emphasize that the groups \eqref{def:bdlAutvxi} and \eqref{def:Autvxi} do not generally coincide, because in the latter case one requires, in addition, compatibility with the right $G$-action.

Given a principal bundle, one can construct new fibre bundles whose fibres carry a representation of the structure group. These are called associated bundles, and they allow one to translate the geometric data of the principal bundle into other contexts, such as vector bundles or tensor bundles. 
\begin{definition}[associated bundle]\label{def:associated bundle}
    Let $\xi=(P,\pi,M)$ be a principal $G$-bundle and $F$ a smooth left $G$-manifold. The associated bundle to $\xi$ with fibre $F$ is a bundle $\xi[F]\coloneqq(P_F,\pi_F,M)$ with total space $P_F\coloneqq(P\times F)/\sim_G$ (sometimes written as $P\times_G F$), where
    \begin{equation}
        (p,f)\sim_G(p',f') \quad \colon\Leftrightarrow \quad \exists g\in G \ \text{such that} \ \begin{cases}
            p'=R_gp \\
            f'=L_{g^{-1}} f
        \end{cases}\;,
    \end{equation}
    and projection $\pi_F\colon P_F\rightarrow M$ defined\footnote{We denote the elements of $P_F$ using square brackets, since they are equivalence classes.} as $\pi_F([p,f])\coloneqq\pi(p)$.
\end{definition}
In the above definition, $R_g$ is the right $G$-action map on $P$, while $L_g$ is the left $G$-action map on $F$. It is easy to see that an associated bundle $\xi[F]$ is a fibre bundle with typical fibre $F$. We now recall a standard result (see, e.g., \cite{Husemoller:1993}) that establishes a correspondence between sections of an associated bundle and maps from the total space of the underlying principal bundle to the fibre, satisfying a suitable compatibility condition.
\begin{proposition}\label{th:function associated to sections}
    Let $\xi=(P,\pi,M)$ be a principal $G$-bundle, $F$ be a left $G$-manifold and
    \begin{equation}
        \label{def:calU}
        \mathcal{U}\coloneqq\{u\colon P\rightarrow F\mid R_g^*u=L_{g^{-1}}u \quad \forall g\in G\}\;,
    \end{equation}
    where $L$ is the left $G$-action on $F$. Then $\mathcal{U}\xleftrightarrow{1:1}\Gamma(\xi[F])$ via
    \begin{equation}
        \sigma_{\bullet}\colon\mathcal{U}\rightarrow\Gamma(\xi[F])\,, \quad u\mapsto\sigma_u \;,
    \end{equation}
    where
    \begin{equation}
        \sigma_u\colon M\rightarrow P\times_G F\,, \quad [p]_G\mapsto [p,u(p)]_G\;.
    \end{equation}
\end{proposition}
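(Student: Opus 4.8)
The plan is to establish the bijection in \autoref{th:function associated to sections} by exhibiting an explicit inverse to $\sigma_\bullet$ and checking the two compositions are the identity, with the main work being the verification that every map produced is well-defined on equivalence classes and smooth.

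First I would show that $\sigma_u$ is well-defined, i.e.\ independent of the representative $p\in[p]_G$: if $p'=R_g p$ then $u(p')=L_{g^{-1}}u(p)$ by the defining property of $\mathcal{U}$ in \eqref{def:calU}, and hence $[p',u(p')]_G=[R_g p, L_{g^{-1}}u(p)]_G=[p,u(p)]_G$ by the equivalence relation of \autoref{def:associated bundle}. Smoothness of $\sigma_u$ follows by working in a local trivialisation of $\xi$, where a local section of $P$ provides a local lift; composing with $u$ and the quotient map gives a smooth map. One also checks $\pi_F\circ\sigma_u=\pi_F([p,u(p)]_G)=\pi(p)=\mathrm{id}_M$ in the trivialising chart, so $\sigma_u\in\Gamma(\xi[F])$.

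Next I would construct the inverse map. Given a section $\sigma\in\Gamma(\xi[F])$, define $u_\sigma\colon P\to F$ as follows: for $p\in P$ the element $\sigma(\pi(p))\in P_F$ is an equivalence class $[q,f]_G$ with $\pi(q)=\pi(p)$; since the $G$-action on $P$ is free and transitive on fibres, there is a unique $g\in G$ with $q=R_g p$, and I set $u_\sigma(p)\coloneqq L_g f$. Independence of the choice of representative $[q,f]_G$ is immediate, and smoothness again follows from a local trivialisation argument. The key point is equivariance: replacing $p$ by $R_h p$ changes the unique group element from $g$ to $gh^{-1}$ (since $q=R_g p=R_{gh^{-1}}(R_h p)$), so $u_\sigma(R_h p)=L_{gh^{-1}}f=L_{h^{-1}}(L_g f)=L_{h^{-1}}u_\sigma(p)$, hence $u_\sigma\in\mathcal{U}$. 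Finally, checking $u_{\sigma_u}=u$ and $\sigma_{u_\sigma}=\sigma$ is a short direct computation: for the first, $\sigma_u(\pi(p))=[p,u(p)]_G$ so the unique group element relating the representative $p$ to itself is the identity and $u_{\sigma_u}(p)=u(p)$; for the second, unwinding the definitions shows $\sigma_{u_\sigma}(\pi(p))=[p,u_\sigma(p)]_G=[q,f]_G=\sigma(\pi(p))$.

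The main obstacle is not any single conceptual step but rather carrying out the smoothness verifications cleanly: one must fix an open cover $\{U_\alpha\}$ of $M$ trivialising $\xi$, with local sections $s_\alpha\colon U_\alpha\to P$, express $u$ and $\sigma$ in these charts, and confirm that the transition data match up so that the locally-defined smooth maps glue. Since this is a standard result (as noted, see \cite{Husemoller:1993}), I would state the local-triviality argument once and not belabour the patching, focusing the exposition on the well-definedness on $\sim_G$-classes and the equivariance computations, which are the substantive points.
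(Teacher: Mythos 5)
The paper does not actually prove this proposition --- it is stated as a standard result with a pointer to \cite{Husemoller:1993} --- so there is no in-paper argument to compare against. Your proof is the standard one and is essentially correct: well-definedness of $\sigma_u$ on $\sim_G$-classes, the explicit inverse $\sigma\mapsto u_\sigma$ built from the free and fibrewise-transitive action, the equivariance check, and the two composition identities are exactly the substantive points, and the smoothness/local-triviality remarks are handled at an appropriate level of detail.

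One bookkeeping remark: with the paper's conventions ($R_{g_1g_2}=R_{g_2}\circ R_{g_1}$ for a right action, $L_{g_1g_2}=L_{g_1}\circ L_{g_2}$ for a left action), if $q=R_g p$ then the unique element relating $q$ to $R_h p$ is $h^{-1}g$, not $gh^{-1}$ as you wrote, since $R_{g'}(R_h p)=p\ra(hg')$ forces $hg'=g$. Your subsequent step $L_{gh^{-1}}f = L_{h^{-1}}(L_g f)$ contains the compensating transposition (it should read $L_{h^{-1}g}f=L_{h^{-1}}(L_g f)$), so the final equivariance $u_\sigma(R_h p)=L_{h^{-1}}u_\sigma(p)$ comes out correct; the two slips cancel but should be fixed in the write-up.
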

We conclude this section by highlighting a particular type of associated bundle that will play a role later on. For a given principal bundle, one can construct its adjoint bundle by taking the structure group as the fibre and letting the group act on itself via the adjoint action. This construction yields a canonical associated bundle that will provide, in the following subsection, an alternative viewpoint on gauge transformations.
\begin{definition}[adjoint bundle]\label{def:adjoint bundle}
    The adjoint bundle $\mathrm{Ad}(\xi)$ of a principal $G$-bundle $\xi$ is the associated bundle $\xi[G]$ with left $G$-action $L_g\coloneqq \mathrm{Ad}_g$\footnote{In this paper, we use the convention $\mathrm{Ad}_gg'\coloneqq gg'g^{-1}$, where $g,g'\in G$.}.
\end{definition}

\subsubsection{Gauge transformations as vertical automorphisms}\label{sec:gauge transf}
Having provided the necessary background to recall the basic concepts about group actions and fibre bundles, we now turn to the discussion of gauge transformations within this framework. For a detailed treatment and further references on gauge transformations in the setting of fibre bundles, see for instance \cite{Husemoller:1993,Attard_2018}.
\begin{definition}[gauge transformation]
    A gauge transformation of a principal $G$-bundle $\xi$ is an element of $\mathrm{Aut}^v_\mathrm{p.bdl}(\xi)$.
\end{definition}
In this perspective, a gauge transformation of a principal bundle is simply a vertical principal bundle automorphism. It reshuffles the points in each fibre according to the structure group, leaving the underlying geometric and topological structure unchanged.
An alternative yet equivalent description of gauge transformations is given by the following standard result.
\begin{proposition}\label{gauge transf interpretations}
    Let $\xi=(P,\pi,M)$ be a principal $G$-bundle, and
    \begin{equation}
        \label{def:gauge group G}
        \mathcal{G}\coloneqq\{\gamma\colon P\rightarrow G\mid R^*_g\gamma=\mathrm{Ad}_{g^{-1}}\gamma \quad \forall g\in G\}\;,
    \end{equation}
    which is a group with respect to pointwise multiplication. Then $\mathrm{Aut}^v_\mathrm{p.bdl}(\xi)\cong\mathcal{G}$ via
    \begin{equation}\label{def:isomorph Autv G}
        \Phi_{\bullet}\colon\mathcal{G}\rightarrow\mathrm{Aut}^v_\mathrm{p.bdl}(\xi) \,, \quad \gamma\mapsto\Phi_\gamma\;,
    \end{equation}
    where $\Phi_\gamma(p)\coloneqq R_{\gamma(p)}p$ for all $p\in P$.
\end{proposition}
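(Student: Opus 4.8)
The plan is to establish the bijection $\Phi_\bullet\colon\mathcal{G}\to\mathrm{Aut}^v_\mathrm{p.bdl}(\xi)$ in three stages: (i) show $\Phi_\gamma$ lands in $\mathrm{Aut}^v_\mathrm{p.bdl}(\xi)$ for each $\gamma\in\mathcal{G}$; (ii) show $\Phi_\bullet$ is a group homomorphism; (iii) construct an explicit two-sided inverse. First I would fix $\gamma\in\mathcal{G}$ and check that $\Phi_\gamma(p)=R_{\gamma(p)}p=p\ra\gamma(p)$ is a vertical bundle automorphism: verticality, $\pi\circ\Phi_\gamma=\pi$, is immediate since $R_{\gamma(p)}$ preserves fibres; smoothness follows from smoothness of $\gamma$ and of the right action; and $\Phi_\gamma$ is a diffeomorphism because its candidate inverse is $p\mapsto p\ra\gamma(p)^{-1}$ (one checks $\Phi_\gamma\bigl(p\ra\gamma(p)^{-1}\bigr)=p\ra\bigl(\gamma(p\ra\gamma(p)^{-1})\gamma(p)^{-1}\bigr)$ and uses the equivariance $R_g^*\gamma=\mathrm{Ad}_{g^{-1}}\gamma$ with $g=\gamma(p)^{-1}$ to collapse this to $p$). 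Finally, $G$-equivariance of $\Phi_\gamma$, i.e.\ $\Phi_\gamma(p\ra g)=\Phi_\gamma(p)\ra g$, is exactly where the defining property of $\mathcal{G}$ is used: $\Phi_\gamma(p\ra g)=(p\ra g)\ra\gamma(p\ra g)=p\ra\bigl(g\,g^{-1}\gamma(p)g\bigr)=p\ra(\gamma(p)g)=\Phi_\gamma(p)\ra g$.

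Next I would verify the homomorphism property $\Phi_{\gamma_1}\circ\Phi_{\gamma_2}=\Phi_{\gamma_1\gamma_2}$, where the product in $\mathcal{G}$ is pointwise, $(\gamma_1\gamma_2)(p)\coloneqq\gamma_1(p)\gamma_2(p)$ (one should first note this product keeps us inside $\mathcal{G}$, again using the equivariance condition). Unwinding, $\bigl(\Phi_{\gamma_1}\circ\Phi_{\gamma_2}\bigr)(p)=\Phi_{\gamma_1}\bigl(p\ra\gamma_2(p)\bigr)=\bigl(p\ra\gamma_2(p)\bigr)\ra\gamma_1\bigl(p\ra\gamma_2(p)\bigr)=p\ra\bigl(\gamma_2(p)\,\gamma_2(p)^{-1}\gamma_1(p)\gamma_2(p)\bigr)=p\ra\bigl(\gamma_1(p)\gamma_2(p)\bigr)$, which is $\Phi_{\gamma_1\gamma_2}(p)$. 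The key cancellation is once more the $\mathrm{Ad}$-equivariance applied to $\gamma_1$.

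For surjectivity and injectivity I would use freeness of the right $G$-action. Given $\Phi\in\mathrm{Aut}^v_\mathrm{p.bdl}(\xi)$, verticality means $\Phi(p)$ lies in the same fibre as $p$, so by freeness there is a unique $\gamma_\Phi(p)\in G$ with $\Phi(p)=p\ra\gamma_\Phi(p)$; this defines a map $\gamma_\Phi\colon P\to G$, smooth by the standard argument that the map $(p,q)\mapsto$ (the unique group element taking $p$ to $q$) is smooth on the pairs lying in a common fibre (a consequence of the principal bundle structure, e.g.\ via local trivialisations). The $G$-equivariance of $\Phi$ translates into $\gamma_\Phi(p\ra g)=g^{-1}\gamma_\Phi(p)g$, i.e.\ $\gamma_\Phi\in\mathcal{G}$; and by construction $\Phi_{\gamma_\Phi}=\Phi$ and $\gamma_{\Phi_\gamma}=\gamma$ by uniqueness. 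This gives the inverse map, completing the proof that $\Phi_\bullet$ is a group isomorphism.

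The only genuinely delicate point is the smoothness of $\gamma_\Phi$; everything else is a routine equivariance computation. I would handle it by passing to a local trivialisation $\psi\colon\pi^{-1}(U)\xrightarrow{\ \sim\ }U\times G$: there $\Phi$ becomes $(m,h)\mapsto(m,\delta(m)h)$ for a smooth $\delta\colon U\to G$ (smoothness of $\delta$ from smoothness of $\Phi$), and $\gamma_\Phi$ reads $\psi^{-1}(m,h)\mapsto h^{-1}\delta(m)h$, manifestly smooth; the local descriptions patch consistently precisely because $\gamma_\Phi$ was defined invariantly via freeness. Alternatively one can cite this as a standard fact from the references (\cite{Husemoller:1993,Attard_2018}), since the statement is a classical one.
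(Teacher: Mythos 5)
The paper does not actually supply a proof of this proposition: it is stated as ``a standard result'' and the proof is omitted, so there is nothing internal to compare against. Your argument is the standard one and is correct: the $\mathrm{Ad}$-equivariance $\gamma(p\ra g)=g^{-1}\gamma(p)g$ is used exactly where it must be (equivariance of $\Phi_\gamma$ and the homomorphism property for the pointwise product on $\mathcal{G}$), freeness plus verticality gives the inverse assignment $\Phi\mapsto\gamma_\Phi$, and the local-trivialisation argument correctly disposes of the only nontrivial analytic point, the smoothness of $\gamma_\Phi$. One cosmetic slip: in the invertibility check the intermediate expression should read $p\ra\bigl(\gamma(p)^{-1}\,\gamma(p\ra\gamma(p)^{-1})\bigr)$ rather than with the factors transposed, since $(p\ra a)\ra b=p\ra(ab)$; the conclusion is unaffected because $\gamma(p\ra\gamma(p)^{-1})=\mathrm{Ad}_{\gamma(p)}\gamma(p)=\gamma(p)$, so both orderings collapse to $p$ (and in any case invertibility already follows from $\Phi_\gamma\circ\Phi_{\gamma^{-1}}=\Phi_{\mathbf 1}=\mathrm{id}$ once the homomorphism property is in hand).
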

By virtue of this isomorphism, the group $\mathcal{G}$ is commonly referred to as the \textit{gauge group} of the principal bundle $\xi$. At this point, it is convenient to pause for a moment and emphasize the difference between the structure group $G$ and the gauge group $\mathcal{G}$ of $\xi$.

On the one hand, $G$ is part of the defining data of the principal bundle and it specifies the intrinsic symmetry of the fibres of $\xi$. In physical terms, $G$ determines the kind of gauge field; for instance, one can choose the structure group to be $U(1)$ (for electromagnetism) or $SU(2)$ (for weak interactions). In the case relevant to our discussion, the structure group will be taken to be a particular loop group $LG$ that will be defined below. This choice allows us to construct a suitable principal bundle whose base space is the boundary of spacetime, thus providing the natural geometric setting for studying gauge fields on the boundary. This concept will be developed in detail in \cref{sec:Extensions and reductions in presence of Boundaries}.

On the other hand, the gauge group corresponds to fibre-preserving automorphisms that encode different (yet physically equivalent) ways of representing the same field configuration. In other words, $\mathcal{G}$ encodes the local redundancies, since gauge transformations change the local representation of the fields along the fibres without affecting observables. More precisely, a gauge transformation $\gamma\in \mathcal{G}$ acts both on a connection one-form $\omega$ and on a curvature two-form $\Omega$ via pullback along the corresponding vertical automorphism $\Phi_\gamma\in\mathrm{Aut}^v_\mathrm{p.bdl}(\xi)$,
\begin{equation}\label{def:gauge transf action}
    \omega^\gamma\coloneqq\Phi^*_\gamma\omega=\mathrm{Ad}_{\gamma^{-1}}\circ\omega+\gamma^{-1}\dd \gamma\,, \quad \Omega^\gamma\coloneqq\Phi_\gamma^*\Omega=\mathrm{Ad}_{\gamma^{-1}}\circ \Omega\;.
\end{equation}
This is a classical result, which can be found in any standard reference on the subject (see, for instance, \cite{Nakahara:2003nw}). In particular, note that $\omega$ acquires an inhomogeneous term under the transformation, whereas $\Omega$ transforms covariantly, so that physically relevant quantities built from it (such as traces and characteristic classes) remain unaffected. Moreover, as expected, the application of two successive gauge transformations $\Phi_\gamma,\Phi_\eta$ can be expressed as the action of a single gauge transformation $\Phi_{\gamma\eta}$. This comes directly from the fact that the composition law of $\mathrm{Aut}^v_\mathrm{p.bdl}$ corresponds to the product in $\mathcal{G}$. Since the gauge group acts via pullback we have, for instance for a connection, that
\begin{equation}\label{eq:composition law gauge transf}
    (\omega^\gamma)^\eta=\Phi_\eta^*(\Phi_\gamma^*\omega)=(\Phi_\gamma\circ\Phi_\eta)^*\omega=\Phi_{\gamma\eta}^*\omega=\omega^{\gamma\eta}\;.
\end{equation}

Let us pause for a moment to make the following remark. In this framework, gauge transformations have a non trivial action on themselves, once again via pullback. This means that the gauge group admits a natural action on its own elements: given $\gamma,\eta\in\mathcal{G}$, the transformation of $\gamma$ by $\eta$ is defined as\footnote{The last equality in \eqref{def:left calG on itself} comes from $\Phi^*_\eta\gamma(p)=\gamma(\Phi_\eta(p))=\gamma(p\ra\eta(p))=\mathrm{Ad}_{\eta^{-1}(p)}\gamma(p)$, for all $p\in P$.}
\begin{equation}\label{def:left calG on itself}
    \gamma^\eta\coloneqq\Phi^*_\eta\gamma=\mathrm{Ad}_{\eta^{-1}}\gamma\;.
\end{equation}
This action is compatible with the composition law \eqref{eq:composition law gauge transf} and the explicit form of the transformed connection \eqref{def:gauge transf action}. Indeed:
\begin{equation}\label{eq:explicit computation composition gauge transf}
    \begin{aligned}
        (\omega^\gamma)^\eta &=(\omega^\eta)^{\gamma^\eta}=\mathrm{Ad}_{(\gamma^\eta)^{-1}}\omega^\eta+(\gamma^\eta)^{-1}\dd\gamma^\eta \\
        &=\mathrm{Ad}_{(\gamma^\eta)^{-1}}(\mathrm{Ad}_{\eta^{-1}}\omega+\eta^{-1}\dd\eta)+(\gamma^\eta)^{-1}\dd\gamma^\eta \\
            &=\mathrm{Ad}_{(\eta\gamma^\eta)^{-1}}\omega+(\eta\gamma^\eta)^{-1}\dd(\eta\gamma^\eta)=\omega^{\eta\gamma^\eta}=\omega^{\eta\mathrm{Ad}_{\eta^{-1}}\gamma}=\omega^{\gamma\eta}\;.
    \end{aligned}
\end{equation}
At first sight, the fact that gauge transformations act non-trivially on themselves may seem wrong, since in the standard formulation of gauge theories the gauge parameter is usually regarded as non-transforming. This discrepancy, however, does not imply an incompatibility between the two descriptions. Rather, it simply reflects a different convention for the composition rule of gauge transformations. Indeed, the usual computation for two successive gauge transformations with non-transforming parameter gives
\begin{equation}
        (\omega^\gamma)^\eta=(\mathrm{Ad}_{\gamma^{-1}}\circ\omega+\gamma^{-1}\dd\gamma)^\eta=\mathrm{Ad}_{\gamma^{-1}}\circ\omega^\eta+\gamma^{-1}\dd\gamma=\omega^{\eta\gamma}\;,
\end{equation}
which, compared with \eqref{eq:explicit computation composition gauge transf}, amounts to a different sign convention at the level of commutators of infinitesimal gauge transformations.

Finally, let us notice that the gauge group $\mathcal{G}$ of a principal bundle $\xi$ is in one to one correspondence with the space of sections of the adjoint bundle of $\xi$, that is, $\mathcal{G}\xleftrightarrow{1:1}\Gamma(\mathrm{Ad}(\xi))$. This correspondence is an immediate consequence of \cref{th:function associated to sections} and \cref{def:adjoint bundle}, and provides yet another way to interpret gauge transformations.
    



\subsubsection{Reduction and extension of structure group}
In the study of gauge theories from the perspective of fibre bundles, it can be useful to compare connections defined on principal bundles with different structure groups. To be more precise, starting from a principal bundle with structure group $G$, one may be interested in building a principal bundle whose structure group arises as either a reduction or an extension of $G$. As we shall see below, this situation provides the natural setting in which dressing fields, together with their transformation laws, appear.

To start with, let us recall the precise notion of principal bundles related by reduction or extension of their structure groups.
\begin{definition}[reduction/extension]\label{def:reduction-extension}
    Let $H\subseteq G$ be Lie groups. Let $\xi=(P,\pi,M)$ and $\xi'=(P',\pi',M)$ be principal $G$- and $H$-bundles, respectively. Then $\xi'$ is called an $H$-reduction of $\xi$ (and $\xi$ is a $G$-extension of $\xi'$) if there exists a bundle morphism $(u,f)$ from $\xi'$ to $\xi$ such that
    \begin{equation}
        R_h^*u=R_hu \quad \forall h\in H \;.
    \end{equation}
\end{definition}

A classical result \cite{Kobayashi:1963,Husemoller:1993} then establishes precise conditions for the existence of such reductions and extensions, ensuring that these constructions are not only formal but can be realized in concrete situations. Before stating it, it is convenient to recall a preliminary fact that will also be useful in the next section. Starting from a principal bundle, one can construct a new bundle by quotienting the total space by a chosen subgroup of the structure group. The resulting bundle can be shown to be a fibre bundle.
\begin{lemma}\label{lemma:isomorphism xiH and xiGH}
    Let $H\subseteq G$ be Lie groups and $\xi=(P,\pi,M)$ be a principal $G$-bundle. Then the bundle $\xi/H\coloneqq(P/H,\pi',M)$ with projection $\pi'$ such that $\pi=\pi'\circ\rho_H$, where $\rho_H\colon P\rightarrow P/H$ is the quotient map, is a fibre bundle with typical fibre $G/H$. Moreover, $\xi/H\cong_{\mathrm{bdl}}\xi[G/H]$ via vertical isomorphism.
\end{lemma}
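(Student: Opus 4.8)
The plan is to prove the two assertions of the lemma separately: first that $\xi/H=(P/H,\pi',M)$ is a fibre bundle with typical fibre $G/H$, and then that it is vertically isomorphic to the associated bundle $\xi[G/H]$, where $G/H$ is given the natural left $G$-action $L_g(g'H)\coloneqq(gg')H$.

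For the first part, I would first check that the projection $\pi'$ is well defined. Since $H\subseteq G$, the right $H$-action on $P$ preserves the fibres of $\pi$, so $\pi$ is constant on $H$-orbits and descends to a surjection $\pi'\colon P/H\rightarrow M$ with $\pi=\pi'\circ\rho_H$. To exhibit local triviality, I would start from a local trivialisation $\chi\colon \pi^{-1}(U)\xrightarrow{\sim} U\times G$ of the principal bundle $\xi$, which is $G$-equivariant for the right $G$-action on the second factor. Because the right $H$-action commutes with quotienting, $\chi$ descends to a diffeomorphism $\pi'^{-1}(U)\cong (U\times G)/H\cong U\times(G/H)$, where in the last step $H$ acts on $G$ by right multiplication. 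The smooth manifold structure on $G/H$ and on the quotient $(U\times G)/H$ is the standard one for a closed subgroup (I would cite this, or simply assume $H$ closed as is implicit); then one verifies the transition functions of $\xi/H$ over overlaps $U_\alpha\cap U_\beta$ are the images in $\mathrm{Diff}(G/H)$ of the original $G$-valued transition functions acting by left translation on $G/H$. This establishes that $\xi/H$ is a fibre bundle with typical fibre $G/H$.

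For the second part, I would construct the vertical isomorphism explicitly and invoke \cref{th:function associated to sections} only implicitly — the cleaner route is a direct map. Define
\begin{equation}
    \Theta\colon P\times_G(G/H)\rightarrow P/H\,,\qquad [p,g'H]_G\mapsto \rho_H(R_{g'}p)=(p\ra g')H\;.
\end{equation}
I would check this is well defined on equivalence classes: replacing $(p,g'H)$ by $(R_gp,L_{g^{-1}}(g'H))=(p\ra g,(g^{-1}g')H)$ sends the image to $(p\ra g\ra g^{-1}g')H=(p\ra g')H$, unchanged. One then verifies $\pi'\circ\Theta=\pi_{G/H}$, so $\Theta$ is base-preserving; smoothness follows since $\Theta$ lifts to the smooth map $(p,g')\mapsto p\ra g'$ on $P\times G$. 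For the inverse, send $pH\in P/H$ to $[p,eH]_G$; this is well defined because $pH=p'H$ means $p'=p\ra h$ for some $h\in H$, and $[p\ra h,eH]_G=[p,hH]_G=[p,eH]_G$ since $hH=eH$. These two maps are mutually inverse by inspection, giving $\xi/H\cong_{\mathrm{bdl}}\xi[G/H]$ via a vertical isomorphism.

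The main obstacle is the smooth-structure bookkeeping rather than anything conceptual: one must know that for a closed subgroup $H\subseteq G$ the quotient $G/H$ is a smooth manifold and that the associated-bundle total space $P\times_G(G/H)$ and the quotient $P/H$ inherit canonical smooth structures for which the maps above are smooth with smooth inverse. This is standard (see \cite{Kobayashi:1963,Husemoller:1993}), and I would either assume $H$ is closed throughout this subsection or cite the slice theorem; with that in hand all the verifications reduce to the elementary checks on representatives sketched above, and local triviality is transported directly from a trivialising atlas of $\xi$.
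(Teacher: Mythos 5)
The paper does not actually prove this lemma; it is stated as a standard preliminary fact and implicitly deferred to the references \cite{Kobayashi:1963,Husemoller:1993}. Your argument is the standard one found there and is correct: descending a $G$-equivariant local trivialisation to the $H$-quotient for local triviality, and the explicit map $\Theta([p,g'H]_G)=(p\ra g')H$ with inverse $pH\mapsto[p,eH]_G$ for the vertical isomorphism. The only items worth tightening are (i) the one-line check that $\Theta$ is also independent of the representative $g'$ of the coset $g'H$ (not just of the representative of the $\sim_G$-class), which follows from $(p\ra g'h)H=(p\ra g')H$ for $h\in H$, and (ii) the closedness hypothesis on $H$ needed for the smooth structure on $G/H$ and $P/H$, which you correctly flag as implicit in the lemma's statement.
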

This lemma provides the basic construction needed to study the problem of reducing or extending the structure group. In particular, the existence of such reductions can be formulated in terms of sections of the quotient bundle.

\begin{proposition}[existence of reduction/extension]\label{th:existence of reduction/extension}
    Let $H\subseteq G$ be Lie groups. Then:
    \begin{enumerate}[label=\roman*)]
        \item any principal $H$-bundle admits a $G$-extension;
        \item a principal $G$-bundle $\xi$ admits an $H$-reduction if and only if the bundle $\xi/H$ has a global well-defined section.
    \end{enumerate}
\end{proposition}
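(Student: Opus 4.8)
\textbf{Proof strategy for Proposition~\ref{th:existence of reduction/extension}.}

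The plan is to treat the two statements separately, since they require rather different constructions, and to lean heavily on \cref{lemma:isomorphism xiH and xiGH} for part ii).

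For part i), I would proceed by an explicit \emph{extension construction}. Given a principal $H$-bundle $\xi'=(P',\pi',M)$ and the inclusion $\iota\colon H\hookrightarrow G$, form the associated bundle $P\coloneqq P'\times_H G=(P'\times G)/\sim_H$, where $(p',g)\sim_H(R_hp',L_{h^{-1}}g)=(p'\ra h,\,h^{-1}g)$, with $G$ regarded as a left $H$-manifold via left translation. One then checks that $P$ carries a free right $G$-action by $[p',g]\ra g'\coloneqq[p',gg']$ (well-definedness is immediate since the $H$-action is on the left and the $G$-action on the right, so they commute), and that the projection $\pi([p',g])\coloneqq\pi'(p')$ makes $(P,\pi,M)$ a principal $G$-bundle. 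Finally define $u\colon P'\rightarrow P$ by $u(p')\coloneqq[p',e_G]$ and $f\coloneqq\mathrm{id}_M$; the pair $(u,f)$ is a bundle morphism, and $R_h^*u(p')=u(p')\ra h=[p',e_G]\ra h=[p',h]=[p'\ra h,e_G]=u(R_hp')=R_hu(p')$ for all $h\in H$, which is exactly the condition in \cref{def:reduction-extension}. Hence $\xi$ is a $G$-extension of $\xi'$.

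For part ii), the ``only if'' direction is the easy one: if $\xi=(P,\pi,M)$ admits an $H$-reduction $\xi'=(P',\pi',M)$ with bundle morphism $(u,f=\mathrm{id}_M)$ satisfying $R_h^*u=R_hu$, then the composite $\rho_H\circ u\colon P'\rightarrow P/H$ is constant on $H$-orbits of $P'$ (because $u$ is $H$-equivariant and $\rho_H$ quotients by $H$), hence descends to a map $\bar{s}\colon M=P'/H\rightarrow P/H$, and one verifies $\pi'\circ\bar{s}=\mathrm{id}_M$ using $\pi=\pi'_{\text{quot}}\circ\rho_H$ from \cref{lemma:isomorphism xiH and xiGH}; thus $\bar{s}$ is a section of $\xi/H$. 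For the ``if'' direction, suppose $\sigma\colon M\rightarrow P/H$ is a section. Using the vertical isomorphism $\xi/H\cong_{\mathrm{bdl}}\xi[G/H]$ from \cref{lemma:isomorphism xiH and xiGH} together with \cref{th:function associated to sections}, $\sigma$ corresponds to a map $\hat{\sigma}\colon P\rightarrow G/H$ with $R_g^*\hat{\sigma}=L_{g^{-1}}\hat{\sigma}$. Define $P'\coloneqq\hat{\sigma}^{-1}([e]_H)\subseteq P$, i.e.\ the preimage of the identity coset; equivalently $P'=\rho_H^{-1}(\sigma(M))$. One then shows that $P'$ is a principal $H$-subbundle of $P$: the right $H$-action on $P$ restricts to $P'$ (since $\hat{\sigma}(p\ra h)=L_{h^{-1}}\hat{\sigma}(p)=[e]_H$ when $\hat{\sigma}(p)=[e]_H$), this restricted action is free and transitive on fibres of $\pi|_{P'}$, and $P'/H\cong M$; the inclusion $u\colon P'\hookrightarrow P$ with $f=\mathrm{id}_M$ is then the required reducing morphism, trivially satisfying $R_h^*u=R_hu$.

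The main obstacle is the ``if'' direction of part ii): one must verify that $P'=\rho_H^{-1}(\sigma(M))$ is genuinely a \emph{smooth} principal $H$-bundle over $M$ and not merely a set-theoretic subspace. This requires local triviality --- concretely, over a trivializing neighbourhood $U\subseteq M$ for $\xi$ one identifies $\pi^{-1}(U)\cong U\times G$, under which $P/H$ restricts to $U\times G/H$ and the section $\sigma$ becomes a smooth map $U\rightarrow G/H$; choosing (locally) a smooth lift $U\rightarrow G$ of this map --- which exists because $G\rightarrow G/H$ is itself a principal $H$-bundle and hence locally trivial --- one obtains a local section of $P'$, establishing that $P'\rightarrow M$ is locally trivial with fibre $H$. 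Patching these local descriptions and checking consistency of the transition functions (which automatically take values in $H$) is the technical heart of the argument; everything else is a routine check of equivariance and freeness.
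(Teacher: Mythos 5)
Your argument is correct and is precisely the classical proof that the paper itself omits, deferring instead to \cite{Kobayashi:1963,Husemoller:1993}: the associated-bundle construction $P'\times_H G$ for the extension, and the correspondence section of $\xi/H$ $\leftrightarrow$ equivariant map $P\to G/H$ $\leftrightarrow$ sub-bundle $\hat{\sigma}^{-1}([e]_H)$ for the reduction, exactly as enabled by \cref{lemma:isomorphism xiH and xiGH} and \cref{th:function associated to sections}. You also correctly isolate the only genuinely technical point (local triviality of $P'\to M$, which rests on $H$ being a closed subgroup so that $G\to G/H$ admits local sections), so there is nothing to add.
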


\subsection{The dressing field method}
We now introduce the key tool that encodes the Stueckelberg procedure in the framework of fibre bundles, namely the dressing field method, which has been extensively discussed in the literature \cite{Attard_2018,Fran_ois_2021,Popov:2021gsa,Popov:2022ytx,Fournel:2012cr,Francois:2015oca,Attard:2016stx,Francois:2017akk,Mathieu:2019lgi,Berghofer:2021ufy,FrancoisAndre:2023jmj,Berghofer:2024plf,Francois:2025sic}. This procedure provides a systematic way to build composite gauge fields with modified transformation properties. A dressing field is a map on the total space of a principal bundle $\xi$ with values in a subgroup of the structure group, with a prescribed equivariance. Such a field can be used to dress the connection on $\xi$, so that the resulting dressed field is strictly invariant under the given subgroup, or transforms covariantly under a residual or even extended symmetry group. From the bundle-theoretic perspective, this method can be interpreted as producing a connection on either a reduced bundle or an extended one.

\begin{definition}[dressing/dressed fields]
    Let $\xi=(P,\pi,M)$ be a principal $G$-bundle, $\omega$ a connection one-form on $\xi$ and $K\subseteq J\subset G$ Lie subgroups. We define the set of $(J,K)$-dressing fields as
    \begin{equation}
        \label{def:dressing fields}
        \mathrm{Dr}(J,K)\coloneqq\{u\colon P\rightarrow J\mid R^*_ku=L_{k^{-1}}u \quad\forall k\in K\}\;.
    \end{equation}
    For a given dressing field $u\in\mathrm{Dr}(J,K)$, let $\phi_u\in\mathrm{Aut}^v_\mathrm{bdl}(\xi)$ be the map\footnote{Note that $\phi_u$ is well defined, since for all $u\in\mathrm{Dr}(J,K)$ and $p\in P$ we have $\phi^*_u\pi(p)=\pi(p\ra u(p))=\pi(p)$ (where $\ra$ is the right $G$-action of the principal bundle), ensuring that $\phi_u\in\mathrm{Aut}^v_\mathrm{bdl}(\xi)$.} $\phi_u(p)\coloneqq R_{u(p)}p$. Then, the $u$-dressed field is defined to be the pullback of $\omega$ by $\phi_u$,
    \begin{equation}
        \omega^u\coloneqq \phi^*_u\omega=\mathrm{Ad}_{u^{-1}}\circ\omega+u^{-1}\dd u \;.
    \end{equation}
\end{definition}
We observe\footnote{Cf. the defining transformation properties of \eqref{def:gauge group G} and \eqref{def:dressing fields}.} that $\mathrm{Dr}(J,K)\not\subset\mathcal{G}$, where $\mathcal{G}$ is the gauge group of $\xi$. In particular, this implies that, in general, $\phi_u$ is not a vertical principal bundle automorphism, but only a vertical bundle automorphism, cf. equations \eqref{def:bdlAutvxi} and \eqref{def:Autvxi}. Moreover, note that $u$ can be equivalently defined by its gauge transformation rule,
\begin{equation}
    \label{def:dressing fields def by transf}
    u^\kappa=\Phi^*_\kappa u\coloneqq L_{\kappa^{-1}}u \quad \forall\kappa\in\mathcal{K}\;,
\end{equation}
where
\begin{equation}
    \label{def:gauge subgroup K}
    \mathcal{K}\coloneqq\{\kappa\colon P\rightarrow K\mid R^*_k\kappa=\mathrm{Ad}_{k^{-1}}\kappa \quad \forall k\in K\}
\end{equation}       
is the \textit{gauge subgroup}\footnote{Note that $\mathcal{K}\not\subset\mathcal{G}$, since the maps in $\mathcal{K}$ are not subject to any compatibility condition with the action of elements in $G\setminus K$. However, the definitions of $\mathcal{K}$ and $\mathcal{G}$ differ only by the image and transformation groups $K\subset G$, cf. \eqref{def:gauge group G} and \eqref{def:gauge subgroup K}. For this reason, we call $\mathcal{K}$ the gauge subgroup of $\mathcal{G}$, via an abuse of terminology.} associated with $K$ and $\Phi_\kappa(p)\coloneqq R_{\kappa(p)}p$ for $p\in P$. Indeed, the fact that $\mathrm{im}(\kappa)\subseteq K$ imply that \eqref{def:dressing fields def by transf} and \eqref{def:dressing fields} are equivalent definitions for the dressing field $u$. At a point $p\in P$, the transformation reads $u^\kappa(p)=L_{\kappa(p)^{-1}}u(p)$. Notice that, since $\kappa\in \mathcal{K}\neq\mathcal{G}$, the map $\Phi_\kappa\colon P\rightarrow P$ defined above is not a gauge transformation (i.e. a vertical principal bundle automorphism) of $\xi$, it is simply a vertical bundle automorphism of $\xi$.

Due to the form of \eqref{def:dressing fields def by transf}, observe that we have a generalization of \eqref{eq:composition law gauge transf} with $\eta$ (but not $\gamma$) now changed to a dressing $u$,
\begin{equation}\label{eq:composition law dressing transf}
    (\omega^\gamma)^u= \omega^{\gamma u}\;.
\end{equation}

The following well-known result \cite{Attard_2018,Popov:2022ytx} states that a connection on a principal $G$-bundle dressed with a $(J,K)$-dressing field is invariant under the gauge transformations given by $K$. 

\begin{proposition}
    Let $\xi=(P,\pi,M)$ be a principal $G$-bundle with gauge group $\mathcal{G}$, $\omega$ a connection one-form on $\xi$, and $K\subseteq J\subset G$ Lie subgroups, with $\mathcal{K}$ the gauge subgroup associated with $K$. Then for all $u\in\mathrm{Dr}(J,K)$ the dressed field $\omega^u$ is $\mathcal{K}$-invariant.
\end{proposition}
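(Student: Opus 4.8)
The plan is to reduce the statement to a single identity between vertical bundle automorphisms of $\xi$. By definition the dressed field is a pullback, $\omega^u = \phi_u^*\omega$, and a gauge transformation by $\kappa\in\mathcal{K}$ also acts by pullback, this time along $\Phi_\kappa$, so that $(\omega^u)^\kappa = \Phi_\kappa^*(\omega^u) = \Phi_\kappa^*\phi_u^*\omega = (\phi_u\circ\Phi_\kappa)^*\omega$. Hence it suffices to prove the equality of maps
\[
\phi_u\circ\Phi_\kappa = \phi_u \qquad \text{on } P,
\]
for then $(\omega^u)^\kappa = \phi_u^*\omega = \omega^u$ immediately.

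To establish this identity I would evaluate both sides at an arbitrary $p\in P$. Writing $\ra$ for the right $G$-action of $\xi$, we have $\Phi_\kappa(p) = p\ra\kappa(p)$ and $\phi_u(q) = q\ra u(q)$, so
\[
(\phi_u\circ\Phi_\kappa)(p) = \bigl(p\ra\kappa(p)\bigr)\ra u\bigl(p\ra\kappa(p)\bigr) = p\ra\Bigl(\kappa(p)\,u\bigl(p\ra\kappa(p)\bigr)\Bigr).
\]
The crucial input is the defining equivariance of the dressing field, $R_k^*u = L_{k^{-1}}u$ for all $k\in K$, i.e.\ $u(p\ra k) = k^{-1}u(p)$; since $\mathrm{im}(\kappa)\subseteq K$, this applies with $k=\kappa(p)$ and gives $u(p\ra\kappa(p)) = \kappa(p)^{-1}u(p)$. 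Substituting, the two $\kappa(p)$ factors cancel and we are left with $(\phi_u\circ\Phi_\kappa)(p) = p\ra u(p) = \phi_u(p)$, which proves the identity and hence the proposition.

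Alternatively, one can check $\mathcal{K}$-invariance by a direct computation at the level of forms: use $u^\kappa = \kappa^{-1}u$ and $\omega^\kappa = \mathrm{Ad}_{\kappa^{-1}}\circ\omega + \kappa^{-1}\dd\kappa$, plug them into $(\omega^u)^\kappa = \mathrm{Ad}_{(u^\kappa)^{-1}}\circ\omega^\kappa + (u^\kappa)^{-1}\dd(u^\kappa)$, expand using $\dd(\kappa^{-1}) = -\kappa^{-1}(\dd\kappa)\kappa^{-1}$, and observe that the two inhomogeneous terms proportional to $\dd\kappa$ cancel, leaving $\mathrm{Ad}_{u^{-1}}\circ\omega + u^{-1}\dd u = \omega^u$. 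This is a routine manipulation, so I would only sketch it. The one point that needs care -- and the only place where the argument is not purely formal -- is to remember that $\Phi_\kappa$ is a vertical bundle automorphism but \emph{not} a gauge transformation of $\xi$ (because $\kappa\in\mathcal{K}\not\subset\mathcal{G}$), so one may not appeal to the composition law of the gauge group; the proof uses only that $\Phi_\kappa$ and $\phi_u$ are built from right translations and the $K$-equivariance of $u$, which is exactly what makes the cancellation work.
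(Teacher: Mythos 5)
Your proof is correct and follows essentially the same route as the paper's: the paper likewise establishes $\phi_u\circ\Phi_\kappa=\phi_u$ pointwise from the $K$-equivariance $u(p\ra k)=k^{-1}u(p)$ and then concludes by functoriality of the pullback, and it also records the algebraic alternative $(\omega^u)^\kappa=(\omega^\kappa)^{\kappa^{-1}u}=\omega^u$ that you sketch. Your closing caveat that $\Phi_\kappa$ is only a vertical bundle automorphism, not a gauge transformation, matches the paper's own remark.
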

\begin{proof}
    Note that $\forall p\in P,k\in K$
    \begin{equation}
        R^*_k\phi_u(p)=\phi_u(p\ra k)=p\ra k\ra u(p\ra k)=p\ra k\ra k^{-1}u(p)=\phi_u(p)\;,
    \end{equation}
    where $\ra$ is the right $G$-action on $\xi$. Equivalently, $\Phi^*_\kappa\phi_u=\phi_u$ for all $\kappa\in \mathcal{K}$. Therefore, $\forall\kappa\in\mathcal{K}$,
    \begin{equation}
        \Phi^*_\kappa\phi^*_u=(\phi_u\circ\Phi_\kappa)^*=(\Phi^*_\kappa\phi_u)^*=\phi^*_u\;,
    \end{equation}
    where the first equality follows from the composition property of pullbacks. This implies
    \begin{equation}
        (\omega^u)^\kappa=\Phi^*_\kappa\omega^u=\Phi^*_\kappa\phi^*_u\omega=\phi^*_\kappa\omega=\omega^u \;,
    \end{equation}
    i.e. $\omega^u$ is $\mathcal{K}$-invariant. An equivalent way to prove this result is the following. The transformation of the dressed field under $\kappa\in\mathcal{K}$ can be written as
    \begin{equation}
        (\omega^u)^\kappa=(\omega^\kappa)^{u^\kappa}\overset{\eqref{def:dressing fields def by transf}}{=}(\omega^\kappa)^{\kappa^{-1}u}\overset{\eqref{eq:composition law dressing transf}}{=} \omega^{\kappa (\kappa^{-1} u)} = \omega^u .
    \end{equation}
    
\end{proof}

\subsubsection{Connection on reduced bundles}

At this point, we turn to the relation between reductions of principal bundles and the emergence of dressing fields. Within this framework, we show that dressing fields appear naturally as the objects that allow us to dress the original connection in order to obtain a connection on the reduced bundle. In addition, as such dressing fields are designed to produce the new connections, they come equipped with precise transformation laws, which we derive explicitly\footnote{See \cite{Attard_2018,Fran_ois_2021} for a similar treatment.}.

As one may infer from \cref{sec:transformation_at_group_level}, and as will become clear in \cref{sec:Extensions and reductions in presence of Boundaries} using loop groups, we are particularly interested in the case where the structure group is reduced to a splitting subgroup. Concretely, if $G$ is the structure group and $N\triangleleft G$ is a normal subgroup, we focus on the situation in which there exists a subgroup $H\subseteq G$ that projects isomorphically onto $G/N$ and satisfies $G=H\ltimes N$. In order to do that, we will need the following maps.

\begin{definition}[projections on semidirect product]\label{def:projections on semidirect product}
    Let $G=H\ltimes N$ be a Lie group. We define the maps $\mathrm{pr}_N\colon G\rightarrow N$ and $\mathrm{pr}_H\colon G\rightarrow H$ via the relation
    \begin{equation}
        g=\mathrm{pr}_N(g)\mathrm{pr}_H(g) \quad \forall g\in G\;.
    \end{equation}
\end{definition}
This definition is well posed, since for all $g\in G=H\ltimes N$ there exists a unique pair $(n,h)\in N\times H$ such that $g=nh$, so $G=NH$ and $H\cong G/N$. One can easily see that the maps above satisfy, for all $n\in N$ and $h\in H$,
\begin{equation}
    \mathrm{pr}_N(n)=n\,,\quad \mathrm{pr}_H(h)=h\,, \quad\mathrm{pr}_N(h)=\mathrm{pr}_H(n)=\mathbf{1}_G\;.
\end{equation}
This follows from \cref{def:projections on semidirect product} and $H\cap N=\{\textbf{1}_G\}$. One can prove the following result.

\begin{theorem}[connection on reduction to quotient group]\label{th:connection on reduction to quotient group}
    Let $G=H\ltimes N$ be a Lie group, $\xi=(P,\pi,M)$ a principal $G$-bundle and $\omega$ a connection one-form on $\xi$. Let $\tilde\xi$ be an $H$-reduction of $\xi$. Then there exists a map $u\colon P\rightarrow N$ satisfying
    \begin{equation}
        \label{transformation of uvarphi}
        R^*_gu=L_{g^{-1}}R_{\mathrm{pr}_H(g)}u \quad \forall g\in G\;.
    \end{equation}
    The dressed connection $\omega^u$ transforms as a connection on $\tilde\xi$ under a gauge transformation $\Phi_\gamma\in\mathrm{Aut}^v_\mathrm{p.bdl}(\xi)$,
    \begin{equation}
        (\omega^u)^\gamma=(\omega^u)^\eta=\mathrm{Ad}_{\eta^{-1}}\circ\omega^u+\eta^{-1}\dd \eta\;,
    \end{equation}
    where $\eta\coloneqq\mathrm{pr}_H\circ\gamma$.
\end{theorem}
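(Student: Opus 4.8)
The plan is to produce the dressing field $u=u_{\mathrm{pr}_H}$ directly from the reduction datum, read off its equivariance, and then feed it through the elementary pullback identities for vertical bundle automorphisms to obtain the transformation rule; the passage to a general homomorphism $\varphi$ is then a twist of $u_{\mathrm{pr}_H}$.

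First I would construct $u:=u_{\mathrm{pr}_H}$. By Proposition~\ref{th:existence of reduction/extension} the existence of the $H$-reduction $\tilde\xi$ is equivalent to a section of $\xi/H$, which by Lemma~\ref{lemma:isomorphism xiH and xiGH} is vertically isomorphic to $\xi[G/H]$, and by Proposition~\ref{th:function associated to sections} such a section is the same datum as a map $\bar u\colon P\to G/H$ with $R^*_g\bar u=L_{g^{-1}}\bar u$ for all $g\in G$, where $L$ is the $G$-action on $G/H$ by translations. The elementary observation is that, since $G=N\rtimes H$, every coset $gH$ contains a unique element of $N$, so $G/H\cong N$ as manifolds and, under this identification, the $G$-action becomes $g\cdot n=\mathrm{pr}_N(gn)$. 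Hence $\bar u$ is nothing but a map $u\colon P\to N\subseteq G$ with $u(p\cdot g)=\mathrm{pr}_N\!\big(g^{-1}u(p)\big)$; using that $\mathrm{pr}_H$ is a homomorphism and $\mathrm{pr}_H(u(p))=\mathbf{1}_G$, this rewrites as $u(p\cdot g)=g^{-1}u(p)\,\mathrm{pr}_H(g)$, i.e.\ $R^*_gu=L_{g^{-1}}R_{\mathrm{pr}_H(g)}u$, which is \eqref{transformation of uvarphi} with $\varphi=\mathrm{pr}_H$. Taking $g=n\in N$ gives $R^*_nu=L_{n^{-1}}u$, so $u\in\mathrm{Dr}(N,N)\subseteq\mathrm{Dr}(G,N)$. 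For a general homomorphism $\varphi\colon G\to H$ restricting to the identity on $H$ (as $\mathrm{pr}_H$ does) I would set $u_\varphi:=u\cdot(\varphi\circ u)^{-1}$, valued in $NH=G$, and verify \eqref{transformation of uvarphi} for it by a one-line computation using only that $\varphi$ is a homomorphism and the equivariance of $u$ just found.

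Next I would establish the transformation law. For any map $f\colon P\to G$ write $\phi_f(p):=R_{f(p)}p\in\mathrm{Aut}^v_\mathrm{bdl}(\xi)$, so $\phi_f^*\omega=\omega^f$, and for $\gamma\in\mathcal{G}$ put $\eta:=\mathrm{pr}_H\circ\gamma$. The key point is the pointwise identity $\phi_u\circ\Phi_\gamma=\phi_{u\eta}$: indeed $(\phi_u\circ\Phi_\gamma)(p)=p\cdot\big(\gamma(p)\,u(p\cdot\gamma(p))\big)$, and the equivariance of $u=u_{\mathrm{pr}_H}$ gives $\gamma(p)\,u(p\cdot\gamma(p))=u(p)\,\mathrm{pr}_H(\gamma(p))=u(p)\,\eta(p)=(u\eta)(p)$. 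Pulling $\omega$ back along this identity, using that pullback reverses composition, and invoking the algebraic composition property \eqref{prop:composition of algebraic dressing} for the maps $u$ and $\eta$:
\begin{equation}
(\omega^u)^\gamma=(\phi_u\circ\Phi_\gamma)^*\omega=\phi_{u\eta}^*\omega=\omega^{u\eta}=(\omega^u)^\eta=\mathrm{Ad}_{\eta^{-1}}\circ\omega^u+\eta^{-1}\dd\eta\,,
\end{equation}
which is the claim. To connect this with the wording ``a connection on $\tilde\xi$'', I would also note the companion equivariance $R^*_g\omega^u=\mathrm{Ad}_{\mathrm{pr}_H(g)^{-1}}\circ\omega^u$, obtained from $\phi_u\circ R_g=R_{\mathrm{pr}_H(g)}\circ\phi_u$ by the same computation with $\gamma$ replaced by the constant map $g$: it shows $\omega^u$ is invariant under the gauge subgroup associated with $N$ and $\mathrm{Ad}$-covariant under $H$, and that $\eta$, which satisfies $R^*_g\eta=\mathrm{Ad}_{\mathrm{pr}_H(g)^{-1}}\eta$, pulls back along the reduction morphism $v\colon\tilde P\to P$ to a genuine gauge transformation of $\tilde\xi$ — so $\omega^u$ does behave exactly as a connection on $\tilde\xi$ would.

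The routine parts are the group identity $\mathrm{pr}_N(g^{-1}n)=g^{-1}n\,\mathrm{pr}_H(g)$ for $n\in N$ and the verification of \eqref{transformation of uvarphi} for the twisted $u_\varphi$. The one genuinely delicate step is the pointwise identity $\phi_u\circ\Phi_\gamma=\phi_{u\eta}$: this is exactly where the \emph{twisted} equivariance of the dressing, $R^*_gu=L_{g^{-1}}R_{\mathrm{pr}_H(g)}u$ with its extra right factor $\mathrm{pr}_H(g)$ (unlike an honest gauge transformation), has to dovetail with the homomorphism property of $\mathrm{pr}_H$; one must also keep in mind throughout that $\eta=\mathrm{pr}_H\circ\gamma$ is not an element of the gauge group $\mathcal{G}$ of $\xi$, and only becomes a bona fide gauge parameter after restriction to $\tilde\xi$ — which is why the statement is naturally phrased as a transformation rule rather than an equality of connections on $\xi$.
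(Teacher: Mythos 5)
Your proposal is correct and follows essentially the same route as the paper's proof: existence of the $H$-reduction $\tilde\xi$ gives a section of $\xi/H$, hence an equivariant map $\tilde u\colon P\to G/H$, which you lift to a $(G,N)$-dressing field and then feed through the composition property $(\omega^{f_1})^{f_2}=\omega^{f_1f_2}$ to land on $(\omega^u)^\gamma=(\omega^u)^\eta$. The only differences are refinements rather than a change of method: you make the lift canonical by identifying $G/H\cong N$ (where the paper takes an arbitrary lift and checks compatibility of the transformation law under projection), you phrase the final step as the pointwise identity $\phi_u\circ\Phi_\gamma=\phi_{u\eta}$ rather than substituting $u^\gamma=\gamma^{-1}u\eta$, and your construction of $u_\varphi$ for general $\varphi$ quietly (and reasonably) restricts to homomorphisms with $\varphi|_H=\mathrm{id}$.
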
 

\begin{proof}
    Since $\tilde\xi$ is an $H$-reduction of $\xi$, \cref{th:existence of reduction/extension} implies that the bundle $\xi/H$ admits a section, which induces a section on $\xi[G/H]$ via the base preserving isomorphism of \cref{lemma:isomorphism xiH and xiGH}. Thus, \cref{th:function associated to sections} ensures that there exists a function $\tilde u\colon P\rightarrow G/H$ such that $\forall g\in G$
    \begin{equation}
        \label{transf rule u tilde}
        R^*_g\tilde u=L_{g^{-1}}\tilde u\;.
    \end{equation}
    Let
    \begin{equation}
        \label{def:isomorphism as sets}
        \varphi\colon N\rightarrow G/H\,, \quad n\mapsto [n]_H
    \end{equation}
    be the bijection identifying the quotient space $G/H$ with the normal subgroup $N$ as sets. Then, there exists a unique lift of $\tilde{u}$ via the map $\varphi$, namely there exists a unique map $u\colon P\rightarrow N$ such that the following diagram commutes:
    \begin{equation}
        \begin{tikzcd}
        & N \arrow[dr, "\varphi"] & \\
        P \arrow[ur, "u"] \arrow[rr, "\tilde u"] & & G/H
        \end{tikzcd}
    \end{equation}
    Then the transformation rule \eqref{transformation of uvarphi} arises as the compatibility condition with \eqref{transf rule u tilde}: for all $g\in G$ we have
    \begin{equation}
        R^*_g\tilde u=R^*_g[u]_H=[R^*_gu]_H\overset{!}{=}L_{g^{-1}}\tilde{u}=L_{g^{-1}}[u]_H=[L_{g^{-1}}R_{\mathrm{pr}_H(g)}u]_H
    \end{equation}
    
    In particular,
    \begin{equation}
        \begin{aligned}
            R^*_nu&=L_{n^{-1}}R_{\mathrm{pr}_H(n)}u=L_{n^{-1}}u \quad \forall n\in N \\
            R^*_hu&=L_{h^{-1}}R_{\mathrm{pr}_H(h)}u=\mathrm{Ad}_{h^{-1}}u \quad \forall h\in H\;,
        \end{aligned}
    \end{equation}
    so $u$ is a $(N,N)$-dressing field. Equivalently,
    \begin{equation}
        \begin{aligned}
            u^\nu&\coloneqq R^*_\nu u=L_{\nu^{-1}}u \quad \forall \nu\in \mathcal{N} \\
            u^\eta&\coloneqq R^*_\eta u=\mathrm{Ad}_{\eta^{-1}}u \quad \forall \eta\in \mathcal{H}\;,
        \end{aligned}
    \end{equation}
    where $\mathcal{N},\mathcal{H}$ are the gauge subgroups\footnote{See \eqref{def:gauge subgroup K}.} of $\mathcal{G}$ (which is the gauge group of $\mathcal{\xi}$) associated with $N,H$ respectively. Then the dressed field $\omega^u$ transforms under a gauge transformation $\Phi_\gamma\in\mathrm{Aut}^v_\mathrm{p.bdl}(\xi)$ as
    \begin{equation}
        (\omega^u)^\gamma=(\omega^\gamma)^{u^\gamma}=(\omega^\gamma)^{\gamma^{-1}u\eta}=(\omega^u)^\eta \;.
    \end{equation}
\end{proof}

In words, this means that whenever a principal $G$-bundle can be reduced to a $(G/N)$-bundle for some normal subgroup $N$, one can use a suitable dressing field to turn any $G$-connection into a $G/N$-connection. This dressing field is a $N$-valued map on the total space of the $G$-bundle, transforming as in \eqref{transformation of uvarphi}.

However, the field $u$ is not the most general object that one may introduce for this purpose. Indeed, while in the proof above $u$ was defined as the lift of $\tilde u$ via the bijection $\varphi$ (so that $u$ naturally takes value in the subgroup $N$), one could instead lift $\tilde{u}$ via the quotient map $\rho_H\colon G\rightarrow G/H$. Since such a lift is not unique, this procedure would produce a whole family of dressing fields, now taking values in the full group $G$ and whose transformation rules differ from \eqref{transformation of uvarphi}, yet still defines a connection transforming under the quotient group $G/N$. This construction is mathematically well-defined and entirely admissible for the purpose of defining connections on the reduced bundle. Nevertheless, it turns out to be not suitable from a physical standpoint. In the setting considered in this paper, we require the dressing fields associated with a bundle reduction encode \textit{all and only} the degrees of freedom corresponding to $N$, the normal subgroup by which $G$ is reduced. A dressing field taking values in the full group $G$ rather than in $N$ would contain redundant information and lack a clear physical interpretation.

\subsection{Stueckelberg fields}

Here, we focus on a different scenario: given a connection on a principal bundle, we aim to construct a connection on an extended, rather than reduced, bundle. Among the possible ways a structure group $H$ can be extended to a group $G$, we are interested in the split case, i.e. when $G= H\ltimes N$ for some normal Lie group $N$. In this case, we say that $G$ is a split extension of $H$ by $N$. Conceptually, this situation can be seen as dual or ``inverse'' to the reduction considered in the previous section. Therefore, the fields we will use here to construct extended connections naturally arise as the ``inverses'' of those introduced in \cref{th:connection on reduction to quotient group}. We refer to these fields as \textit{Stueckelberg fields} and show that they can be equivalently described by a section of a suitably defined associated bundle, which we call the \textit{Stueckelberg bundle}. To this end, we begin with the following definition.

\begin{definition}[Stueckelberg action] 
    Let $G$ be a split extension of $H$ by $N$. We define the Stueckelberg action as the map
    \begin{equation}
        \label{def:stueckelberg action}
        \mathrm{St}\colon G\times N\rightarrow N\,, \quad (g,n)\mapsto \mathrm{pr}_H(g)ng^{-1}\;,
    \end{equation}
    where $\mathrm{pr}_H$ is the canonical projection on $H$, see \cref{def:projections on semidirect product}.
\end{definition}

\begin{remark}
    The above definition is well-posed.
\end{remark}
\begin{proof}
    For all $g\in G$ and $n\in N$,
    \vspace{-3mm}
    \begin{equation}
        \mathrm{St}(g,n)=\mathrm{pr}_H(g)\,n\,\mathrm{pr}_N(g^{-1})\mathrm{pr}_H(g^{-1})=\mathrm{pr}_H(g)\,\overbrace{n\,\mathrm{pr}_N(g^{-1})}^{\in N}\mathrm{pr}_H(g)^{-1}\in N\;,
    \end{equation}
    since $N\triangleleft G$. In the last equality we used
    \begin{equation}
        \mathrm{pr}_H(g^{-1})=\mathrm{pr}_H(g)^{-1} \quad \forall g\in G \;,
    \end{equation}
    which is true since $\mathrm{pr}_H$ is a homomorphism and, as such, maps inverses to inverses.
\end{proof}

In what follows, we will often use the notation $\mathrm{St}_g(n)\coloneqq\mathrm{St}(g,n)$, so that $\mathrm{St}_g\in\mathrm{Aut}(N)$ for all $g\in G$. We are now in a position to introduce the definition of Stueckelberg fields as $N$-valued maps on a principal $G$-bundle that transform under the Stueckelberg action. As we shall see, this transformation rule is precisely the one needed to construct connections on split extended principal bundles.

\begin{definition}[Stueckelberg fields]
    Let $\xi=(P,\pi,M)$ be a principal $G$-bundle, with $G$ a split extension of $H$ by $N$. We define the set of $(N,G)$-Stueckelberg fields as
    \begin{equation}
        \label{def:stueckelberg fields}
        \mathrm{St}(N,G)\coloneqq\{s\colon P\rightarrow N\mid R^*_gs=\mathrm{St}_{g^{-1}}s \quad \forall g\in G\}\;.
    \end{equation}
\end{definition}
By following the same logic of \cref{subsec:principal and associated bundles}, a geometric characterization of the above fields can be given in terms of a suitable associated bundle.
\begin{definition}[Stueckelberg bundle]
    Let $\xi$ be a principal $G$-bundle, with $G$ a split extension of $H$ by $N$. The Stueckelberg bundle $\mathrm{St}(\xi)$ is the associated bundle $\xi[N]$ with left $G$-action given by Stueckelberg action $\mathrm{St}$.
\end{definition}
\begin{proposition}
    Let $\xi$ be a principal $G$-bundle, with $G$ a split extension of $H$ by $N$. Then
    \begin{equation}
        \mathrm{St}(N,G)\xleftrightarrow{1:1}\Gamma(\mathrm{St}(\xi))\;.
    \end{equation}
\end{proposition}
\begin{proof}
    It follows immediately from \cref{th:function associated to sections}.
\end{proof}
In other words, the Stueckelberg fields can be equivalently regarded as sections of the Stueckelberg bundle, in much the same way that gauge transformations can be interpreted as sections of the adjoint bundle (see comment at the and of \cref{sec:gauge transf}). This identification provides a clear geometric interpretation of Stueckelberg fields, highlighting their role as natural objects associated with the underlying principal bundle structure.

\subsubsection{Connection on extended bundles}
The transformation law \eqref{def:stueckelberg fields} in our definition of a Stueckelberg field is precisely the one required to construct connections on a bundle whose structure group arises as a split extension, as we prove in the following theorem.

\begin{theorem}[connection on split extension]\label{th:connection on split extension}
    Let $\tilde\xi$ be a principal $H$-bundle with connection one-form $\omega$, and let $G$ be a split extension of $H$ by a Lie group $N$. Then there exists a Stueckelberg field $s\in\mathrm{St(N,G)}$ on the extended $G$-bundle $\xi$, such that the dressed field
    \begin{equation}
        \omega^s\coloneqq R^*_s\omega=\mathrm{Ad}_{s^{-1}}\circ\omega+s^{-1}\dd s
    \end{equation}
    transforms as a connection one-form on $\xi$ under a gauge transformation $\Phi_\gamma\in\mathrm{Aut}^v_\mathrm{p.bdl}(\xi)$,
    \begin{equation}
        (\omega^s)^\gamma=\mathrm{Ad}_{\gamma^{-1}}\circ\omega^s+\gamma^{-1}\dd \gamma\;.
    \end{equation}
\end{theorem}
\begin{proof}
    Proposition \ref{th:existence of reduction/extension} guarantees that a $G$-extension $\xi$ of the $H$-bundle $\tilde\xi$ always exists. In turn, $\tilde\xi$ is an $H$-reduction of such extension $\xi$, so \cref{th:connection on reduction to quotient group} ensures the existence of a map $u\colon P\rightarrow N$ such that
    \begin{equation}
        R^*_gu=L_{g^{-1}}R_{\mathrm{pr}_H(g)}u \quad \forall g\in G\;.
    \end{equation}
    We define the map $s\colon P\rightarrow N$ to be the pointwise inverse of $u$,
    \begin{equation}
        \label{pf:stueckelberg field}
        s(p)\coloneqq u(p)^{-1} \quad \forall p\in P \;.
    \end{equation}
    This map transforms under an element $g\in G$ as
    \begin{equation}
        \begin{aligned}
            R^*_gs&=R^*_g(u)^{-1}=(R^*_gu)^{-1}=L_{\mathrm{pr}_H(g)^{-1}}R_{g}(u)^{-1}=L_{\mathrm{pr}_H(g^{-1})}R_{g}s=\mathrm{St}_{g^{-1}}s\;,
        \end{aligned}
    \end{equation}
    so it is a $(N,G)$-Stueckelberg field. Under a gauge transformation $\Phi_\gamma\in\mathrm{Aut}^v_\mathrm{p.bdl}(\xi)$, it transforms as
    \begin{equation}
        s^\gamma\coloneqq R^*_\gamma s=L_{\eta^{-1}}R_\gamma s \;,
    \end{equation}
    where $\eta=\mathrm{pr}_H\circ\gamma$. Thus, the composite field $\omega^s$ transforms as
    \begin{equation}
        (\omega^s)^\gamma=(\omega^\gamma)^{s^\gamma}=(\omega^\eta)^{\eta^{-1}s\gamma}=\omega^{s\gamma}\;,
    \end{equation}
    for an arbitrary gauge transformation $\gamma\in\mathcal{G}$ in the gauge group of $G$.
\end{proof}
The content of this theorem is conceptually analogous to that of \cref{th:connection on reduction to quotient group}: we begin with a connection on a principal bundle, modify the structure group, and then identify new fields with suitable transformation properties, which are subsequently used to define connections transforming appropriately under the gauge group associated to the new structure group. There is, however, a significant difference between the two cases. In \cref{th:connection on reduction to quotient group}, the existence of the reduced bundles must be assumed, since not every principal bundle admits a reduction\footnote{In the physical situation considered in this paper, such a reduction is indeed available.}. In contrast, for extensions no such assumption is needed, by virtue of \cref{th:existence of reduction/extension}.

Let us make a final comment: all results presented above \cref{sec: Fibre bundle}, formulated in terms of connections on principal bundles, can of course equivalently be expressed locally using the corresponding Yang-Mills fields. To be more precise, given a connection one-form $\omega$ on a principal $G$-bundle $\xi=(P,\pi,M)$ and an open subset $U\subseteq M$ of the base space, one can define the gauge field $\mathcal{A}\colon\Gamma(TU)\rightarrow \mathfrak{g}$ via the pullback of $\omega$ along a section $\sigma\in\Gamma(\xi)$ on $U$,
\begin{equation}
    \mathcal{A}\coloneqq\sigma^*\omega\;.
\end{equation}

In particular, \cref{th:existence of reduction/extension} implies that a reduction of $\xi$ to an arbitrary subgroup $H\subset G$ always exists locally. Indeed, the potential obstruction (i.e. the existence of a section on $\xi/H$) is global, and therefore disappear when one restricts to a sufficiently small patch. Thus, upon choosing local sections, any Yang–Mills field may be locally dressed according to a prescribed reduction. However, the discussion presented above captures the topological constraints that may obstruct a global reduction of the structure group, which cannot be seen from purely local considerations.

\section{Boundary description via formal loop groups}
\label{sec:Extensions and reductions in presence of Boundaries}
In the previous section we analysed in detail the situations in which the \textit{structure group} of a principal bundle is reduced by a normal subgroup or extended to a semidirect product (split extension). This is the case in general when we have a fixed base manifold $M$, and we want to restrict or extend the group under which we study gauge transformations.

In our particular case, we take the base manifold $M$ to be a space-time with a future null boundary, $\partial M = \Ib^+$. The introduction of a boundary has two main characteristics. First, we include the boundary of the manifold into the analysis, and therefore we need a transversal coordinate. Second, we restrict the gauge group by its fall-offs \emph{relative to the boundary}. 

In what follows, we will apply the results of the previous section by taking a formal expansion in terms of a transverse coordinate to $\partial M$, denoted as $\rf$, and consider the loop group generated by the formal loop algebra $L \mathfrak{g}$ as the new structure group, i.e. we construct the group generated by formal $\rf$-expansions with coefficients on $\mathfrak{g}$ at the boundary. There is a vast literature on loop groups and their representations; for a comprehensive introduction we refer the reader to Pressley and Segal’s book \cite{Pressley:1988qk} on loop groups, while Kac’s book \cite{Kac:1990gs} remains the standard reference for loop algebras.

This construction allows us to interpret the boundary $\partial M$ as the base manifold supporting a suitable principal bundle whose structure group has been enlarged to contain the fall-offs of the fields with respect to the bulk. 

With these motivations in place, we now proceed to the detailed construction. The starting point is a principal $G$-bundle $\xi\coloneqq(P,\pi,M)$ describing Yang-Mills theory on the asymptotically flat spacetime $M$. Its gauge group, defined in \eqref{def:gauge group G} and denoted here by $\mathcal{G}(\xi)$, has, in a local trivialisation, the Lie algebra given by $C^\infty$ maps from $U$ to $\mathfrak{g}$,
\begin{equation}
    \label{eq:Lie algebra of gauge transf on M}
    \mathrm{Lie}(\mathcal{G}(\xi))=\{\lambda\colon U\rightarrow\mathfrak{g}\}\;,
\end{equation}
where $U\subset M$ is an open subset and $\mathfrak{g}$ is the Lie algebra of $G$. Elements of \eqref{eq:Lie algebra of gauge transf on M} can be written as $\lambda=\lambda^a(x)T^a$, where $\{T^a\}_{a=1,\dots,\dim \mathfrak{g}}$ is a basis of $\mathfrak{g}$, and $x= \{x^\mu\}_{\mu=1,\dots,\dim M}$ are local coordinates on $U$.

To study gauge transformations at the boundary of $M$, a natural first attempt would be to consider the restriction of the bundle to $\partial M$, namely the principal $G$-bundle
\begin{equation}
    \xi_{\partial M}\coloneqq(P,\pi_{\partial M},\partial M) \;,
\end{equation}
where the projection is defined by $\pi_{\partial M}\coloneqq\pi|_{\mathrm{preim}_\pi(\partial M)}$. Observe that this construction simply restricts the base manifold of the bundle, while the structure group $G$ remains unchanged. Locally, the Lie algebra of gauge transformations for this restricted bundle is
\begin{equation}
    \label{eq:Lie algebra of gauge transf on partial M}
    \mathrm{Lie}(\mathcal{G}(\xi_{\partial M}))=\{\Lambda_0\colon U\cap\partial M\rightarrow\mathfrak{g}\}\;.
\end{equation}
Let $\mathbf{y}=\{y^i\}_{i=1,\dots,\dim M-1}$ be a set of local coordinates on $U\cap\partial M$. Then the above space consists of $\mathfrak{g}$-valued functions $\Lambda_0(\mathbf{y})$, which naturally identify with the gauge parameters defined in \eqref{Lambda0def}. Therefore, while this construction is the most immediate, it captures only $\Lambda_0$ and discards the large gauge parameter (the $\Lambda_+$ defined in \eqref{def: Lambda plus}, with all the logarithmic terms taken to vanish for simplicity), which are expected to be responsible for the subleading soft effects coming from the soft theorems. A more refined geometric procedure is required, which we now describe.

Large gauge transformation parameters, which diverge at the boundary, are in fact contributions originating from the bulk. Therefore, in order to include them in our analysis, we must consider the boundary $\partial M$ together with its past domain of dependence\footnote{The \textit{past domain of dependence} of $\partial M$ is the set of points $x \in M$ such that every future inextendible causal curve through $x$ intersects $\partial M$, see \cite{Wald:1984rg}.}, $D^-(\partial M) \subset M$.

In order to express the gauge parameter $\Lambda$ in an expansion analogous to that of \eqref{def: Lambda plus} (again, without the logarithmic terms), we need to introduce local coordinates on $D^-(\partial M)$, in a neighborhood containing $\partial M$. To this end, we recall that the Collar Neighborhood Theorem guarantees that there exists a neighborhood in $D^-(\partial M)$, which we call $\widehat{\partial M}$, containing the boundary $\partial M$ and such that $\widehat{\partial M}$ is diffeomorphic to $[0,\rho_0)\times\partial M$, with $\rho_0 >0$, where the first factor parametrizes the direction normal to the boundary.\footnote{We are only considering $\partial M$ as $\Ib^+$, where the Collar Neighborhood is well defined towards the ``inside'' of the spacetime manifold. Other signatures of $\partial M$, such as space- or time-like may require a different treatment.} This allows us to define a coordinate system $(\rho,\mathbf{y})$, where $\mathbf{y}$ are the local coordinates on $\partial M$ and $\rho\in[0,\rho_0)$ is the transverse coordinate, in which the large gauge parameters can be systematically expanded.

The correct geometric step is therefore not to restrict the bundle $\xi$ merely to the boundary $\partial M$, but rather to the region $\widehat{\partial M}$. We then define the principal $G$-bundle $\xi_{\widehat{\partial M}}\coloneqq(P,\pi_{\widehat{\partial M}},\widehat{\partial M})$, with projection $\pi_{\widehat{\partial M}}\coloneqq\pi|_{\mathrm{preim}_\pi(\widehat{\partial M})}\;$. The gauge Lie algebra for this bundle is given by
\begin{equation}
    \label{eq:Lie algebra of gauge transf on hat partial M}
    \mathrm{Lie}(\mathcal{G}(\xi_{\widehat{\partial M}}))=\{\Lambda\colon U\cap\widehat{\partial M}\rightarrow\mathfrak{g}\}\;.
\end{equation}
 In general, given a function $f$ on $U\cap\widehat{\partial M}$ with non-vanishing limit on $\partial M$, it can be expanded using a Laurent series as
\begin{equation}
    f(\rho,\mathbf{y})=\sum_{n\leq 0}\rho^nf^{(-n)}(\mathbf{y})\;.
\end{equation}
In these coordinates, the boundary is located at $\rho=0$. To make contact with the coordinates $(\mathfrak{r},\mathbf{y})$ used in \cref{sec:transformation_at_group_level}, we observe that $\rho$ is related to $\mathfrak{r}$ via
\begin{equation}
    \rho=1/\mathfrak{r} \;.
\end{equation}
This identification allows us to consistently translate expressions between the two coordinate systems. In our particular case, we have gauge parameters with finite and divergent components at the boundary, so they can be expressed as a Laurent series
\begin{equation}
    \label{eq:Laurent expansion of Lambda in rho}
    \Lambda(\rho,\mathbf{y})=\sum_{n\leq 0}\rho^n\Lambda^{(-n)}(\mathbf{y})=\sum_{n\geq 0}\mathfrak{r}^n\Lambda^{(n)}(\mathbf{y}).
\end{equation}
The key observation is that the above expression can be interpreted in two equivalent ways: the parameters $\Lambda$ with non-vanishing limit on the boundary can be regarded either as
\begin{enumerate}
    \renewcommand{\labelenumi}{\roman{enumi}.}
    \item $\mathfrak{g}$-valued maps on $\widehat{\partial M}$, or
    \item $L\mathfrak{g}$-valued maps on $\partial M$,
\end{enumerate}
where $L\mathfrak{g}$ is the formal loop algebra of $\mathfrak{g}$. Explicitly, in case ii., we think of the $\Lambda^{(n)}$ as objects defined on $\partial M$, extracted from $\Lambda$ via
\be 
\Lambda^{(n)} = \frac{1}{2\pi i}\text{Res}_{\rho = 0} \left( \rho^{n-1}\Lambda \right) = \frac{1}{2\pi i} \text{Res}_{\mathfrak{r}=\infty} \left( {\mathfrak{r}}^{-n-1} \Lambda \right) , \quad n\geq 0.
\ee 

Let us pause the discussion and briefly outline the concept of a formal loop algebra, providing the framework for properly describing the second interpretation above. Consider the coordinate $\rf$ transversal to the boundary $\partial M$. We can take formal expansions in the real parameter $\rf$ as follows: let $L\mathfrak{g}$ be the formal loop algebra

\be 
L\mathfrak{g} := \C [\rf ,\rf^{-1}] \otimes \mathfrak{g} ,
\ee
with Lie bracket
\be 
[p(\rf)\otimes a , q(\rf) \otimes b]_{L\mathfrak{g}} = p(\rf)q(\rf) \otimes [a,b]_{\mathfrak{g}}, \quad p(\rf),q(\rf)\in \C [\rf ,\rf^{-1}] ;\quad a,b \in \mathfrak{g},
\ee
where $[\cdot , \cdot]_{\mathfrak{g}}$ is the Lie bracket in $\mathfrak{g}$, $\otimes$ is the tensor product and $\C [\rf ,\rf^{-1}] $ is the algebra of Laurent polynomials in $\rf$. In the case of a periodic $\mathfrak{r}$, the algebra is the actual \emph{loop} algebra of maps form $S^1$ to $\mathfrak{g}$ (see \cite{Kac:1990gs} for details), hence the name \emph{formal} in our case.

We consider the exponentiation of this algebra, which we will call the \emph{formal loop group} $L G$ associated to $G$,
\be 
\text{exp}: L \mathfrak{g} \rightarrow L G, \quad  p(\rf) \otimes a \mapsto e^{i p(\rf) \otimes a}\;.
\ee

In particular, in the second viewpoint of the series in \eqref{eq:Laurent expansion of Lambda in rho}, the parameters $\Lambda$ naturally form the Lie algebra of the principal bundle with base manifold $\partial M$ and a structure group within the formal loop group $LG$.

We are interested in the subgroup in $LG$ generated by \emph{non-vanishing} transformations. Consider the following subalgebras\footnote{The reader may be more familiar with the usual triangular decomposition of a loop algebra, which can be defined as follows. Let $\mathfrak{g} = \mathfrak{n}_- \oplus \mathfrak{h} \oplus \mathfrak{n}_+ $ be the triangular decomposition of $\mathfrak{g}$. We can lift this decomposition into a triangular decomposition of $L\mathfrak{g}$ as follows
\begin{equation}
    \begin{aligned}
            \mathfrak{l}_- &:= (\rf^{-1} \C [\rf^{-1}] \otimes (\mathfrak{h} \oplus \mathfrak{n}_+)) \oplus \C[\rf^{-1}] \otimes \mathfrak{n}_-, \\
     \mathfrak{l}_+ &:= (\rf \C [\rf] \otimes (\mathfrak{h} \oplus \mathfrak{n}_-)) \oplus \C[\rf] \otimes \mathfrak{n}_+, \\
     \mathfrak{l}_0 &:= \mathfrak{h}.
    \end{aligned}
\end{equation}
In our decomposition \eqref{eq:our decomposition}, $\tilde{\mathfrak{l}}_0$ is no longer the Cartan subalgebra, but for our purposes this does not restrict the analysis.
}

\begin{equation}
    \begin{aligned}\label{eq:our decomposition}
            \tilde{\mathfrak{l}}_- &:= \rf^{-1} \C [\rf^{-1}] \otimes \mathfrak{g}, \\
    \tilde{\mathfrak{l}}_+ &:= \rf \C [\rf] \otimes \mathfrak{g}, \\
    \tilde{\mathfrak{l}}_0 &:= \C \otimes \mathfrak{g}.
    \end{aligned}
\end{equation}
We can then define the following subgroups in $LG$:
\begin{equation}
    \begin{aligned}
        N_- &:= e^{\tilde{\mathfrak{l}}_-}, \\
        N_+ &:= e^{\tilde{\mathfrak{l}}_+}, \\
        N_0 &:= e^{\tilde{\mathfrak{l}}_0}.
    \end{aligned}
\end{equation}
Note that by equation \eqref{eq:normality}, we have that $N_\pm \triangleleft \left\langle N_\pm , N_0 \right\rangle $, where $\left\langle , \right\rangle$ denote the generated group. This is a key observation, and allows us to use \cref{th:connection on split extension,th:connection on reduction to quotient group}.

The natural definition for the \emph{boundary} structure group is then the one generated by $N_+$ and $N_0$,
\be \label{def:G partial M}
G_{\partial M} := \left\langle N_+ , N_0 \right\rangle \subset LG.
\ee
We denote the new bundle associated to $G_{\partial M}$ by
\begin{equation}
    \xi_{\partial M}\coloneqq(P',\pi'_{\partial M},\partial M)\;.
\end{equation}
The gauge Lie algebra for this bundle is given by
\begin{equation}
    \label{eq:Lie algebra of gauge transf on Loop partial M}
    \mathrm{Lie}(\mathcal{G}(\xi_{\partial M}))=\{\Lambda\colon U\cap\partial M\rightarrow L\mathfrak{g}\}\;.
\end{equation}
Schematically, we have a commuting diagram,
\begin{equation}
    \begin{tikzcd}[column sep=3cm, row sep=huge]
        \xi \arrow[r, "M\rightarrow\partial M"] \arrow[d, "M\rightarrow \widehat{\partial M}"'] &  \xi_{\partial M} \arrow[d, "G\rightarrow G_{\partial M}"] \\
        \xi_{\widehat{\partial M}} \arrow[r, "\text{$\widehat{\partial M}\rightarrow\partial M$}", "\text{$G\rightarrow G_{\partial M}$}"'] & \xi_{\partial M}
    \end{tikzcd}
\end{equation}

In this construction, there is a trade-off between gauge transformations as smooth sections\footnote{Recall that a gauge transformation of a principal bundle $\xi$ can be viewed as a section of the associated bundle $\mathrm{Ad}(\xi)$, see observation at the end of \cref{sec:gauge transf}.} on $M$ on one side, and a formal expansion of smooth sections on $\partial M$ on the other.
\begin{figure}[h!]
    \centering
    \includegraphics[width=0.45\linewidth]{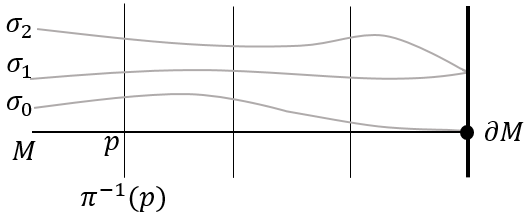}
    \caption{Schematic representation of the map from gauge group on $M$ and the gauge group on $\partial M$.}
    \label{fig_def_loop}
\end{figure}
In \autoref{fig_def_loop} we show three sections, $\sigma_0$, $\sigma_1$ and $\sigma_2$, which we are taking as $G$-valued smooth functions on $M$. Section $\sigma_0$ tends to the identity element $\mathbf{1}_G$ on the boundary. Therefore, by definition, its corresponding element on the gauge group associated to the boundary will have values on $N_-$. Then, we can say it does not survive the ``limiting process'' to the boundary. Sections $\sigma_1$ and $\sigma_2$ converge to the same value at $\partial M$. Nevertheless, their fall-offs are different, and therefore they are mapped to different elements on $G_{\partial M}$. Indeed, say 
\be 
\sigma_1(\mathbf{y}) = \Lambda_0 (\mathbf{y}) + r f_1(\mathbf{y}), \quad \sigma_2(\mathbf{y}) = \Lambda_0 (\mathbf{y}) + r f_2(\mathbf{y}),
\ee
for $f_1$ and $f_2$ different functions. Then 
\be 
\lim_{\mathfrak{r} \rightarrow +\infty} \partial_{\mathfrak{r}} \sigma_1 \rightarrow f_1(\mathbf{y}) \neq f_2(\mathbf{y}) = \lim_{ \mathfrak{r} \rightarrow +\infty} \partial_{\mathfrak{r}} \sigma_2,
\ee
and so their representation in the loop group is different.

With this picture in mind, we can interpret $N_0$ as the group corresponding to the residual symmetry at the boundary responsible for the finite large gauge symmetries of the theory. We can also interpret $N_+$ as the (bigger) group corresponding to the large gauge symmetries of order $O(r)$ and higher.

\subsection{Stueckelberg procedure}

In order to make the connection with the results stated in the previous section, within the framework developed above, below we explain how the extension/reduction for the new principal bundle $\xi^L_{\partial M}$ lead naturally to the Stueckelberg procedure shown in \autoref{sec:transformation_at_group_level}.

We define the gauge transformations at the boundary as follows.
\begin{definition}[gauge transformation at the boundary]
    Let $\xi=(P,\pi,M)$ be a principal bundle with structure group $G$. A gauge transformation at the boundary of $M$ is a gauge transformation of the restricted bundle $\xi^L_{\partial M}\coloneqq(P',\pi'_{\partial M},\partial M)$ with structure group $G_{\partial M}$ as defined in \eqref{def:G partial M}.
\end{definition}
Thus, gauge transformations at the boundary are vertical principal bundle automorphisms of $\xi_{\partial M}$. We reproduce the explicit definition here for convenience:
\be 
    \mathrm{Aut}^v_\mathrm{p.bdl} (\xi_{\partial M}) := \{ \Phi : P_{\partial M} \rightarrow P_{\partial M} \mid \Phi^* \pi_{\partial M} = \pi_{\partial M}, \, \, R^*_{g} \Phi = R_{g} \Phi \quad \forall g \in G_{\partial M}  \}\;.
\ee

Let $\mathcal{H}$, $\mathcal{N}$ and $\mathcal{G}$ be the gauge group corresponding to $N_0$, $N_+$ and $G_{\partial M}$, respectively. These coincide with the gauge groups already defined in \cref{sec:transformation_at_group_level}, whose elements are now viewed as formal expansions on $\mathfrak{r}$ with support on $\partial M$.

Now, we are in position to apply \cref{th:connection on split extension,th:connection on reduction to quotient group}.

\begin{itemize}
    \item \textbf{Two-step procedure}. We consider the normal chain
    \be 
        \{\mathbf{1}_{G_{\partial M}} \} \triangleleft N_+ \triangleleft G_{\partial M}.
    \ee
    
    Consider a gauge field $\mathcal{A}$ that transforms under $\mathcal{H}$ but is invariant under $\mathcal{N}$, as it is the situation of the gauge field in \cref{sec:two_step_stueck}. On the one hand, by the reduction procedure in \cref{th:connection on reduction to quotient group}, we can dress $\mathcal{A}$ via
    \be 
        \mathbfcal A = s_0^{-1} \mathcal{A} s_0 + s_0^{-1} \dd s_0 \; ,
    \ee
    where $s_0$ is a $N_0$-valued smooth function on $P_{\partial M}$ transforming as \eqref{s_0Transform}, such that $\mathbfcal A$ is now invariant under $G_{\partial M}$. 
    
    On the other hand, since $N_+$ is normal in $G_{\partial M}$, we can use \cref{th:connection on split extension} to dress the gauge field $\mathcal{A}$, 

    \be 
    \mathcal A \; \mapsto \;  \tilde{\mathcal{A}} = s^{-1}_+  \mathcal A  s_+ + s^{-1}_+  \dd(s_+),
    \ee
    where $s_+$ is a $G_{\partial M}$-valued smooth function incorporating the new symmetries, transforming as \eqref{res:trasformation of s plus}. The final transformation $\mathbfcal A \mapsto \tilde{\mathcal{A}}$ is 
    \be 
    \mathbfcal A \; \mapsto \;  \tilde{\mathcal{A}} = s^{-1}_+ s_0\mathbfcal A s^{-1}_0 s_+ + s^{-1}_+ s_0 \dd(s^{-1}_0 s_+).
    \ee 
    In this way, we start with a semi-invariant gauge field, $\mathcal{A}$, reduce it to be full invariant, $\mathbfcal A$, and extend the symmetries to the group $G_{\partial M}$ via a dressing to $\tilde{\mathcal{A}}$. 
    
    \item \textbf{Single-step extension}. Take the trivial normal subgroup, 
    \be 
        \{\mathbf{1}_{G_{\partial M}} \} \triangleleft G_{\partial M}.
    \ee
    
    We start with a gauge field $\mathbfcal A$ which is completely invariant under the gauge group associated to $G_{\partial M}$, and we want to construct a gauge field $\tilde{\mathcal{A}}$ that captures transformations under the full gauge group. Since the trivial subgroup $\{\mathbf{1}_{G_{\partial M}} \}$ is normal in $G_{\partial M}$, the extension in \cref{th:connection on split extension} gives us a straightforward dressing,
    \be 
    \mathbfcal A \; \mapsto \; \tilde{\mathcal{A}} = s^{-1}  \mathbfcal A s + s^{-1} \dd(s),
    \ee 
    where $s$ is a $G_{\partial M}$-valued smooth function on $P_{\partial M}$ transforming as \eqref{full s transform}, and contains the full set of symmetries given by $G_{\partial M}$, thus the name Stueckelberg field given to it in \cref{subsec:review Stueck}.
\end{itemize}

As we mentioned in \cref{sec:two_step_stueck}, the two approaches can be related by identifying $s^{-1}_0 s_+$ in the first procedure with $s$ in the second procedure. The next diagram helps to visualize the two Stueckelberg procedures:
\begin{equation}
    \begin{tikzcd}[column sep=large, row sep=large]
        \mathcal{A} \arrow[r, "s_+"] \arrow[d,shift left=1mm, swap, "s_0^{-1} \;\; "] &
        \tilde{\mathcal{A}} \\
        \mathbfcal A \arrow[u, shift left=1mm, "\;\; 
        s_0 "']\arrow[ur, swap, "s"] &
    \end{tikzcd} 
\end{equation}

\section{Conclusions}
\label{sec: Conclusions}

In this article, we have significantly extended our previous proposal for extending the phase space such that it incorporates symmetries responsible for sub$^n$-leading soft theorems in a mathematically robust framework, via the so-called Stueckelberg procedure. We have achieved this in three main directions. Firstly, we constructed an explicit boundary action for the Stueckelberg fields, from which the charges associated with the subleading soft effects can be computed directly. Secondly, we performed the explicit renormalisation of the symplectic potential appearing in the calculation of the charges, both in the radial and in the $u$-direction. Finally we provide a formal derivation of the Stueckelberg procedure via fibre bundles, which leads us to a surprisingly clear picture of the Lie algebra at the boundary as a loop algebra, and its corresponding exponentiation as the structure group.

A natural direction for future work is to extend our construction to the gravitational setting. The Stueckelberg procedure is well suited to incorporate spacetime symmetries \footnote{See e.g.\cite{Henneaux:1989zc,Nagy:2019ywi,Bansal:2020krz,Bansal:2024pgg}.}, and edge modes have already been explored in the context of gravity \cite{Donnelly:2016auv, Speranza:2022lxr, Freidel:2021cjp, Donnelly:2022kfs}, and the dressing field method has been studied for gravitational theories \cite{Francois:2021jrk,Francois:2024rfh,Francois:2024rdm,Francois:2024vlr,Francois:2025jro,Berghofer:2025ius}. It would also be worthwhile to investigate whether the fibre bundle framework can provide a nice geometrical picture in this case.

The inclusion of loop effects \cite{He:2014bga, Bianchi:2014gla,Sahoo:2018lxl, Pasterski:2022djr, Donnay:2022hkf,Agrawal:2023zea,Choi:2024ygx,Campiglia:2019wxe,AtulBhatkar:2019vcb,Choi:2024mac,Choi:2024ajz} is known to generate logarithmic corrections to soft theorems and their associated symmetries. Our framework is, in principle, well suited to accommodate such modifications (see e.g \eqref{A log expansion}, \eqref{def: Lambda plus} and \eqref{def:dressing field s plus}), and a natural continuation of this work is to apply it in order to systematically incorporate these contributions. In particular, we intend to extend our boundary Lagrangian so as to capture loop-level effects and to construct the corresponding renormalised charges. We are also interested in recent developments on the relationship between soft and edge modes and the dressing by Wilson lines \cite{Carrozza:2021gju, Carrozza:2022xut, Araujo-Regado:2024dpr}. We leave the study of the relation of dressings by Wilson lines with the Stueckelberg procedure presented in this article for future work. This program can be pursued both in gauge theory and in gravity, and it has the potential to shed light on how infrared structures are preserved or modified once quantum corrections are taken into account.

Finally, an important avenue for future investigation is to deepen the understanding of the interplay with higher-spin algebras—specifically, the $S$-algebra in Yang–Mills theory and the $w_{1+\infty}$ loop algebra in gravity\footnote{See \cite{Pope:1989ew,Bakas:1989xu,Fairlie:1990wv,Pope:1991ig,Pope:1991zka,Strominger:2021mtt,Guevara:2021abz,Himwich:2021dau,Jiang:2021ovh,Boyer:1985aj,Park:1989fz,Park:1989vq,Adamo:2021lrv,Monteiro:2022lwm,Bu:2022iak,Bittleston:2023bzp,Bittleston:2024rqe,Taylor:2023ajd,Kmec:2024nmu,Nagy:2024dme,Lipstein:2023pih,Geiller:2024bgf,Freidel:2021ytz,CarrilloGonzalez:2024sto,Agrawal:2024sju,Kmec:2025ftx,Cresto:2025bfo,Baulieu:2024rfp}}. Progress in this direction would bring us closer to clarifying the profound connections between higher-spin symmetries and asymptotic symmetry structures. Encouragingly, these algebras naturally emerge in the self-dual sectors of the theories, where we have already begun to explore asymptotic symmetries and their associated Stueckelberg dressings \cite{Campiglia:2021srh,Diaz-Jaramillo:2025gxw,Nagy:2022xxs}.

\acknowledgments
 
We thank Sangmin Choi, Nicolas Cresto, Laurent Freidel, Marc Geiller, Sucheta Majumdar, Andrea Puhm and Celine Zwikel for useful discussions. G.P. is funded by STFC Doctoral Studentship 2023. S.N. is supported in part by STFC consolidated grant T000708. J.P. was partially supported by Fondo Clemente Estable Project FCE\_1\_2023\_1\_175902, CSIC grant C013-347, and funded by Postdoctoral Fellowship at Concordia University, Montreal. Research at Perimeter Institute is supported in part by the Government of Canada through the Department of Innovation, Science and Economic Development Canada and by the Province of Ontario through the Ministry of Colleges and Universities. This work was supported by the Simons Collaboration on Celestial Holography.

\appendix
\section{Some derivations}
\label{App:some derivations}
Consider a set of fields $A_1(\mathfrak{r},\mathbf{y}),\dots,A_N(\mathfrak{r},\mathbf{y})$ which admit an expansion in all integer powers of $\mathfrak{r}$
\be \label{App:expansion}
 A_i(\mathfrak{r},\mathbf{y})=\sum_{n\in\mathbb{Z}}\mathfrak{r}^nA_i^{(n)}(\textbf{y})\; ,\quad i=1,\dots,N
\ee 
Consider a renormalised product of the above, in the sense that we only keep the $\mathfrak{r}^0$ coefficient of the product (we omit the $\mathbf{y}$ dependence for simplicity):
\be 
\left[A_1...A_N \right]^{(0)}= \sum_{n_1+...+n_N=0}A_1^{(n_1)}...A_N^{(n_N)}
\ee
Then we see that variation commutes with renormalisation:
\be 
\begin{aligned}
\delta\left[A_1...A_N \right]^{(0)}&=\delta \sum_{n_1+...+n_N=0}A_1^{(n_1)}...A_N^{(n_N)}\\
&=\sum_{n_1+...+n_N=0}\left(\delta A_1^{(n_1)}...A_N^{(n_N)}+...+A_1^{(n_1)}...\delta A_N^{(n_N)}\right)\\
&=\left[\delta A_1...A_N+...+A_1...\delta A_N \right]^{(0)}
\end{aligned}
\ee 
upon using the following definition for the variation of the coefficients of $\mathfrak{r}$ in \eqref{App:expansion}:
\be 
\delta A_1^{(n_1)}\equiv [\delta A_1]^{(n_1)}
\ee 
We can now write the variation of the boundary part of the action in \eqref{bound Lag} as 
\be 
\begin{aligned}
\delta S_{\partial M}&=\int_{\partial M}\tr[\mathbf j\wedge(\delta \mathbfcal A+\dd(\delta s)s^{-1}-\dd s(s^{-1}\delta ss^{-1}))]^{(0)} \\
    &=\int_{\partial M}\tr[\mathbf j\wedge\delta \mathbfcal A -\dd \mathbf j\delta ss^{-1}-\mathbf j\wedge\dd ss^{-1}\delta ss^{-1}+\mathbf j\wedge \delta ss^{-1}\dd ss^{-1}\\
    &\quad\quad\quad\quad\ +\dd(\mathbf j \delta ss^{-1})]^{(0)}\\
    &=\int_{\partial M}\tr[\mathbf j\wedge\delta \mathbfcal A -\left(s\dd\left(s^{-1} \mathbf j s\right)s^{-1}\right)\delta ss^{-1} +\dd(\mathbf j \delta ss^{-1})]^{(0)} \; .
\end{aligned}
\ee

\section{Lie group actions on manifolds}
\label{App:{Lie group actions on manifolds}}
In what follows we recall some basic notions about Lie groups and their relation with manifolds, which will play an important role in the discussion in \autoref{sec: Fibre bundle}. Throughout, by \enquote{manifold} we always mean a smooth manifold.

Let $G$ be a Lie group and $H\subseteq G$ a Lie subgroup. The \textit{left coset} of $H$ in $G$ with representative $g\in G$ is defined as
\begin{equation}
    gH\coloneqq[g]_H=\{gh \mid h\in H\} \;,
\end{equation}
where $[g]_H$ denotes the equivalence class of $g$ with respect to the relation $g\sim_Hg'$ if and only if $g'=gh$ for some $h\in H$. The collection of all left cosets forms the \textit{quotient space}
\begin{equation}
    G/H\coloneqq\{gH\mid g\in G\} \;,
\end{equation}
with the natural projection
\begin{equation}
    \rho_H\colon G\rightarrow G/H\,, \quad g\mapsto [g]_H
\end{equation}
called the \textit{quotient map}. A further important concept is that of a normal subgroup. A subgroup $N\subseteq G$ is said to be \textit{normal}, written $N\triangleleft G$, if it is stable under conjugation in $G$, that is
\begin{equation}
    \mathrm{Ad}_g(n)\in N \quad \forall g\in G,n\in N\;,
\end{equation}
where $\mathrm{Ad}_g\in\mathrm{Aut}(G)$ is the map $\mathrm{Ad}_g(g')\coloneqq gg'g^{-1}$. The collection of all left cosets of $N$ in $G$ (i.e. the quotient space $G/N$) forms a group, called the \textit{quotient group}.

The next key notion is that of group actions on manifolds, which formalize how a Lie group can act smoothly on the points of a manifold. This concept will be crucial in the present section, as it provides the framework for discussing (principal) fibre bundles and, eventually, the transformation properties of the Stueckelberg fields.
\begin{definition}[group action]
    The left $G$-action of a Lie group $G$ on a manifold $M$ is a map
    \begin{equation}
        \la\colon G\times M\rightarrow M\,, \quad (g,m)\mapsto g\la m
    \end{equation}
    such that for all $g_1,g_2 \in G$ and $m\in M$
    \begin{align}
        \label{def: left action prop}
        e\la m=m\,, \quad (g_1g_2)\la m=g_1\la g_2\la m \;,
    \end{align}
    where $e\in G$ is the identity element. $M$ equipped with such a map is called a left $G$-manifold.
\end{definition}
An analogous definition can be given for a right action $\ra\colon M\times G\rightarrow M$ and a right $G$-manifold\footnote{In this case, the defining properties in \eqref{def: left action prop} become, respectively, $m\ra e=m$ and $m\ra(g_1g_2)=m\ra g_1\ra g_2$.}. Given a fixed element $g\in G$, one can define the maps $L_g,R_g\colon M\rightarrow M$ as
\begin{equation}\label{def:action maps}
    L_g(m)\coloneqq g\la m\,, \quad R_g(m)\coloneqq m\ra g
\end{equation}
which are referred to as the left and right action maps associated with $g$, respectively, and will be very useful later on to simplify the notation. When using the above maps throughout the paper we often adopt the convention of omitting parentheses; for example, $L_g(m)$ is simply written as $L_gm$.

Whenever two manifolds endowed with group actions are involved, it is natural to ask for maps between them that are compatible with the respective actions. Such maps are called equivariant maps. Intuitively, an equivariant map is one that commutes with the action of the groups.
\begin{definition}[equivariant map]
    Let $G,H$ be Lie groups and $\rho\colon G\rightarrow H$ a Lie group homomorphism. Let $M,N$ be manifolds with left\footnote{An analogous definition holds for right actions.} $G$-action $\la$ and left $H$-action $\blla$, respectively. A map $f\colon M\rightarrow N$ is called $\rho$-equivariant if
    \begin{equation}
        f(g\la m)=\rho(g)\blla f(m) \quad \forall g\in G, m\in M\;.
    \end{equation}
\end{definition}
To analyse the geometry of group actions in more detail, it is useful to examine the sets of points that a group can move a given point to, as well as the subgroups that leave points fixed. These lead to the fundamental notions of orbits and stabilizers.
\begin{definition}[orbit, stabiliser]
    Let the group $G$ act on the manifold $M$ with a left action $\la$. The orbit of a point $m\in M$ is defined to be the set
    \begin{equation}
        \mathscr{O}_m\coloneqq\{m'\in M\mid \exists g\in G \; \colon \ m'=g\la m\} \;,
    \end{equation}
    while the stabiliser (or isotopy group) of $m$ is the subgroup
    \begin{equation}
        S_m\coloneqq\{g\in G\mid g\la m=m\}\;.
    \end{equation}
\end{definition}
Orbits partition the manifold into disjoint sets, and the collection of these orbits forms the \textit{quotient space}
\begin{equation}
    M/G\coloneqq\{\mathscr{O}_m\subseteq M\mid m\in M\}\;,
\end{equation}
that captures the manifold’s structure up to the group action. If we regard the orbit $\mathscr{O}_m$ as a point in $M/G$, we might also denote it by $[m]_G$. This notation is natural, as the orbits of $G$ are the equivalence classes of the relation $m\sim_G m'$ if and only if $m'\in\mathscr{O}_m$, where $m,m'\in M$. For this reason, the quotient space can be equivalently written as $M/\sim_G$.

Certain types of group actions are widely studied in differential geometry and also play a role in our discussion of fibre bundles.
\begin{definition}[free action]
    A $G$-action on a manifold $M$ is called free if $S_m=\{e\}$ for all $m\in M$, where $e$ is the identity element of $G$.
\end{definition}

To conclude this short review, let us recall the definition of pullback. Let $V,W,Z$ be vector spaces and $M,N$ be smooth manifolds. The \textit{pullback of a map} $g\colon W\rightarrow Z$ by a function $f\colon V\rightarrow W$ is $f^*g\coloneqq g\circ f$. The \textit{pullback of a form} $\omega\in T^*_{\phi(p)}N$ by a function $\phi\colon M\rightarrow N$ at a point $p\in M$ is defined by $(\phi^*)_p(\omega)(X)\coloneqq\omega((\phi_*)_p(X))$, where $(\phi_*)_p$ is the pushforward of $\phi$ at $p$ and $X\in T_p M$.



\bibliography{Ref_all_r_SD}
\bibliographystyle{utphys}

\end{document}